\begin{document}


\newcommand{\citex}[2]{\cite[\expandafter{#1}]{#2}}  
\newcommand{\dlmf}[3][]{\href{http://dlmf.nist.gov/#2#1#3}{#2#3}}
\newcommand{\wikipedia}[2][en]{\href{http://#1.wikipedia.org/wiki/#2}{#2}}
\newcommand{\elektrodo}{https://gitlab.com/cfgy/elektrodo/tree/publ-2018jan}

\newcommand{\fourier}{\operatorname{\mathcal{F}}_{x}}
\newcommand{\laplace}{\operatorname{\mathcal{L}}_{t}}
\newcommand{\ud}[1]{\,\mathrm{d}#1}

\newcommand{\e}{\operatorname{e}}

\newcommand{\flechadoble}{\ce{<=>}}
\newcommand{\deriv}[2]{\frac{\ud #1}{\ud #2}}
\newcommand{\parderiv}[2]{\frac{\partial #1}{\partial #2}}
\newcommand{\simu}{_{\mathrm{sim}}}
\newcommand{\unit}{^{\text{unit}}}
\newcommand{\whole}{^{\text{whole}}}

\newtheorem{observacion}{Remark}[section]
\newtheorem{definicion}{Definition}[section]
\newtheorem{lema}{Lemma}[section]
\newtheorem{teorema}{Theorem}[section]
\newtheorem{corolario}{Corollary}[section]

\tcolorboxenvironment{observacion}{enhanced jigsaw, breakable, before skip=1em, after skip=1em, colframe=gray}
\tcolorboxenvironment{definicion}{enhanced jigsaw, breakable, before skip=1em, after skip=1em, colframe=gray}
\tcolorboxenvironment{lema}{enhanced jigsaw, breakable, before skip=1em, after skip=1em, colframe=gray}
\tcolorboxenvironment{teorema}{enhanced jigsaw, breakable, before skip=1em, after skip=1em, colframe=gray}
\tcolorboxenvironment{corolario}{enhanced jigsaw, breakable, before skip=1em, after skip=1em, colframe=gray}
\tcolorboxenvironment{proof}{blanker, breakable, left=5mm, before skip=1em, after skip=1em, borderline west={1mm}{0pt}{gray}, parbox=false}
\renewcommand{\qedsymbol}{\textit{QED}.}


\newcommand{\givennames}[1]{#1}
\newcommand{\familynames}[1]{\textsc{#1}}

\title{
	Properties of coplanar periodic electrodes in confined spaces:
	Case of two-dimensional diffusion
}

\author{
	\givennames{Cristian F.} \familynames{Guajardo Yévenes} \\
	Biological Engineering Program \\
	and Pilot Plant, Development and Training Institute \\
	King Mongkut's University of Technology Thonburi, Thailand \\
	\texttt{cristian.gua@kmutt.ac.th}
	\and
	\givennames{Werasak} \familynames{Surareungchai} \\
	School of Bioresources and Technology \\
	and Nanoscience \& Nanotechnology Graduate Program \\
	King Mongkut's University of Technology Thonburi, Thailand \\
	\texttt{werasak.sur@kmutt.ac.th}
}

\date{February 10, 2018}


\newcommand{\sep}{, }
\CatchFileDef{\resumen}{text-abstract.tex}{}
\CatchFileDef{\pclaves}{text-keywords.tex}{}
\hypersetup{
	pdftitle={Properties of coplanar periodic electrodes in confined spaces: Case of two-dimensional diffusion},
	pdfauthor={Cristian F. GUAJARDO YÉVENES and Werasak SURAREUNGCHAI},
	pdfsubject={\resumen},
	pdfkeywords={\pclaves}
}


\numberwithin{equation}{section}

\maketitle

\section*{Abstract}


Periodic configurations of electrodes, in particular of microelectrodes,
have been of interest since the advent of microfabrication.
In this report, theory which is common to any periodic cell
(or any cell that can be extended periodically)
with finite height and two-dimensional symmentry was derived.
The diffusion equation in this cell was solved
and the concentration profile was obtained
in terms of its Fourier coefficients
and as a function of an arbitrary current density.
From this base result, a set of properties were derived
which are fairly general, since they don't assume restrictions
such as reversible electrode reactions
(Nernst equation valid when current circulates).
These properties involve:
horizontal averages and (weighted) sum of concentrations,
both with a close connection to the net current
and accumulation of species in the cell.
The derived properties allow:
to explain qualitative aspects of collection efficiency and limiting currents,
to predict the concentration on counter electrodes and non-linearities
caused by depletion of species at extremely polarized electrodes,
and to estimate the time required by the current to reach steady state
in potential controlled experiments.
The theoretical results are illustrated analytically
and numerically for the concrete case of interdigitated array of electrodes.

\vspace{\baselineskip} \noindent
\emph{Keywords:}
Periodic cell\sep
confined cell\sep
diffusion equation\sep
average concentration\sep
collection efficiency\sep
limiting current.

\tableofcontents


\section{Introduction}

Microelectrodes have been used since early 1980 \cite{Dayton:1980:}
due to their many advantageous properties, such as reduced ohmic drops,
faster time constants, better signal-to-noise ratios
and steady-state signals \cite{Forster:2007:,Szunerits:2007:}.
These electrodes have also been arranged in a periodic fashion (arrays),
in order to produce higher currents, while still maintaining
the basic microelectrode properties \cite{Szunerits:2007:}.
From these periodic configurations, \emph{microband array electrodes} (MBAE, only anodes or cathodes)
and \emph{interdigitated array of electrodes} (IDAE, alternating anodes and cathodes)
are common examples found in the literature.

Theoretical results for periodic configurations were first obtained
considering unrestricted (semi-infinite) geometries.
Analytical results to predict steady-state currents and voltammograms were found
in case of MBAE \cite{Morf:1996:sep,Morf:2006:may} and
in case of IDAE \cite{Aoki:1988:dec,Aoki:1990:apr,Morf:2006:may}.
Numerical results through simulations have been also obtained
to estimate the time dependence of the current and voltammograms
in case of MBAE \cite{Bard:1986:sep,Streeter:2007:aug,Pebay:2013:dec} and
in case of IDAE \cite{Aoki:1989:jul,Jin:1996:aug:b,Yang:2007:oct}.
Besides these mature results, there are also novel semi-analytical results
predicting the chronoamperometry at microband electrodes \cite{Bieniasz:2015:oct}.

Currently, with the advent of microfluidic technology and flexible materials,
confined (finite) electrochemical cells have gained importance,
since electrochemical cells are placed inside shallow channels \cite{Han:2014:}
or meant to be used in narrow cavities of the body \cite{Kanno:2014:}.

In the literature, the behavior of such electrodes
in restricted or finite spaces has been predicted mostly through simulations,
which allows interpretation of electrochemical phenomena
in case of IDAE \cite{Strutwolf:2005:feb,Goluch:2009:may,Han:2014:,Kanno:2014:}
and in case of MBAE \cite{Bellagha-Chenchah:2016:jun}.
Analytical results to predict the behavior in confined spaces are few
\cite{GuajardoYevenes:2013:sep},
and commonly the results for semi-infinite counterparts are used instead \cite{Shim:2013:},
which are valid only when the cell is tall enough \cite{GuajardoYevenes:2013:sep}.

In this report, analytical properties which are common to any periodic cell
(or any cell that can be extended periodically) with finite height 
and two-dimensional symmetry are derived.
The base analytical result consists of
the concentration profile in stagnant solution,
expresed in terms of its Fourier coeficients,
and considers an arbitrary current density flowing in the cell.
From this result, analytical properties for average concentrations
and weighted sum of concentrations are derived.

These results are of importance since they can explain qualitative aspects
of collection efficiency and limiting currents.
Also they allow to determine the concentration of electrochemical species
on the counter electrode (commonly unknown \emph{a priori}),
which is particularly useful for defining boundary conditions
of simulations that include such electrode.
Explanation of non-linear effects on the concentration
caused by depletion of species at electrodes that are extremely polarized,
is also possible with the results.
Finally estimations of the time required by the current to reach steady state can be obtained.
These properties are illustrated analytically and numerically by simulations
for the particular case of interdigitated array of electrodes.


\section{Theory}
\label{theory}

\subsection{Definition of the periodic cell}
\label{coplanar:def:problema}

Consider an electrochemical cell with a coplanar configuration of electrodes
located at the bottom plane $z=0$,
and a roof (insulator layer) located at the top plane $z=H$.
The configuration of electrodes is periodic (with period $p_{x}$) along the $x$-axis,
and it is symmetric along the $y$-axis,
such that the concentration profiles don't depend on the variable $y$.

Inside the electrochemical cell there is an oxidated species $O$ and a reduced species $R$,
which react at the surface of the electrodes according to the reaction
\begin{equation}
	\label{coplanar:eqn:reaccion}
	O + n_{e}\, \mathrm{e}^{-} \flechadoble R
\end{equation}
where $n_{e}$ corresponds to the number of exchanged electrons.
Here it is assumed that the transport of the species $\sigma \in \{O,R\}$ is solely due to diffusion.

Under the stated conditions,
the concentration $c_{\sigma}(x,z,t)$ of the electrochemical species $\sigma$
can be modeled by the two-di\-men\-sio\-nal diffusion equation%
\footnote{
	whenever $\pm$ or $\mp$ are found,
	the upper and lower signs corresponds to 
	$\sigma=O$ and $\sigma=R$ respectively.
}
\begin{subequations}
	\label{coplanar:eqn:difusion:idae}
	\begin{align}
		\frac{1}{D_{\sigma}} \parderiv{c_{\sigma}}{t}(x,z,t)
		&= \parderiv{^2 c_{\sigma}}{x^2}(x,z,t)
		+ \parderiv{^2 c_{\sigma}}{z^2}(x,z,t)
		\\
		c_{\sigma}(x,z,0^{-}) &= c_{\sigma,i}(x,z)
		\\
		\mp D_{\sigma} \parderiv{c_{\sigma}}{z}(x,H,t) &= 0
		\\
		\mp D_{\sigma} \parderiv{c_{\sigma}}{z}(x,0,t) &= \frac{j(x,t)}{F n_{e}}
		\\
		c_{\sigma}(x, z, t) &= c_{\sigma}(x + p_{x}, z, t)
	\end{align}
\end{subequations}
where $c_{\sigma,i}(x,z)$ is the initial concentration profile,
which is assumed to come from a previous steady state,
$D_{\sigma}$ is the diffusion coefficient of the species $\sigma$,
$F$ is the Faraday's constant,
$j(x,t)$ is an \textbf{arbitrary} current density flowing at the bottom boundary $z=0$,
and $p_{x}$ is the period along the $x$-axis.

\subsection{Properties of the initial concentration in steady state}
\label{coplanar:properties:initial}

With the assumptions in \S\ref{coplanar:def:problema},
it is possible to derive some useful \emph{conservation} properties,
which araise as direct consequences of the theorem stated below

\begin{teorema}
	\label{coplanar:teo:ci}
	Consider the periodic cell described in \S\ref{coplanar:def:problema}.
	If the initial concentration $c_{\sigma,i}(x,z)$ of species $\sigma \in \{O, R\}$
	comes from a previous steady state,
	then the initial current density $j_{i}(x)$ satisfies Kirchhoff's current law
	within one period of the cell
	\begin{equation}
		\label{coplanar:eqn:kirchhoff:inicial}
		p_{x} \fourier j_{i}(0)
		= \int_{-p_{x}/2}^{+p_{x}/2} j_{i}(x) \ud{x} = 0
	\end{equation}
	and the Fourier coefficients of the initial concentration profile satisfy
	\begin{equation}
		\label{coplanar:eqn:ci}
		\fourier c_{\sigma,i}(n_{x},z) = \left\{
		\begin{array}{ll}
		\bar{c}_{\sigma,i}, & n_{x} = 0
		\\
		\displaystyle
		\pm G\!\left(H-z,\, n_{x}^{2} \frac{4\pi^{2}}{p_{x}^{2}}\right)
		\frac{\fourier j_{i}(n_{x})}{F n_{e} D_{\sigma}}, & n_{x} \neq 0
		\end{array}
		\right.
	\end{equation}
	where $\fourier j_{i}(n_{x})$ are the Fourier coefficients of the initial current density,
	$p_{x}$ is one period of the cell, $\bar{c}_{\sigma,i}$ is a real constant
	and $G(z,s)$ is given by
	\begin{equation}
		\label{coplanar:eqn:G}
		G(z,s) = \frac{\cosh(\sqrt{s}\, z)}{\sqrt{s} \sinh(\sqrt{s}\, H)}
	\end{equation}
\end{teorema}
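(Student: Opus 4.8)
The plan is to use the steady-state hypothesis to collapse the diffusion equation to an ordinary differential equation in $z$ after a Fourier expansion in $x$, and then to read off both claims directly from the two boundary conditions. First I would set $\parderiv{c_{\sigma}}{t}=0$ in the bulk equation, so that the initial profile is harmonic, $\parderiv{^2 c_{\sigma,i}}{x^2}(x,z)+\parderiv{^2 c_{\sigma,i}}{z^2}(x,z)=0$. Since $c_{\sigma,i}$ is $p_{x}$-periodic in $x$, a Fourier-series ansatz in $x$ satisfies the periodic condition automatically, and applying $\fourier$ replaces $\parderiv{^2(\cdot)}{x^2}$ by multiplication by $-s$ with $s=n_{x}^{2}\,4\pi^{2}/p_{x}^{2}$, leaving for each mode the linear ODE $\deriv{^2}{z^2}\fourier c_{\sigma,i}(n_{x},z)=s\,\fourier c_{\sigma,i}(n_{x},z)$.

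Second I would solve this ODE mode by mode. For $n_{x}\neq 0$ (so $s>0$) the general solution is a combination of $\cosh(\sqrt{s}\,(H-z))$ and $\sinh(\sqrt{s}\,(H-z))$, whereas for $n_{x}=0$ (so $s=0$) it is affine, $A+Bz$. Writing the hyperbolic solutions in the shifted argument $H-z$ is the natural choice, because it makes the roof boundary condition transparent and leads directly to the function $G$ of \eqref{coplanar:eqn:G}.

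Third I would impose the two boundary conditions, which is the crux of the argument. The Fourier-transformed no-flux condition at the roof, $\deriv{}{z}\fourier c_{\sigma,i}(n_{x},H)=0$, annihilates the $\sinh$ term for $n_{x}\neq 0$ and forces $B=0$ for $n_{x}=0$. The transformed bottom condition $\mp D_{\sigma}\deriv{}{z}\fourier c_{\sigma,i}(n_{x},0)=\fourier j_{i}(n_{x})/(F n_{e})$ then fixes the surviving coefficient for $n_{x}\neq 0$; rearranging and recognising $\cosh(\sqrt{s}\,(H-z))/(\sqrt{s}\sinh(\sqrt{s}\,H))=G(H-z,s)$ reproduces $\pm G(H-z,s)\,\fourier j_{i}(n_{x})/(F n_{e}D_{\sigma})$, with the $\pm$ coming out of the $\mp$ convention at the electrode. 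For $n_{x}=0$, however, $B=0$ already makes the left-hand side of the bottom condition vanish identically, so solvability forces $\fourier j_{i}(0)=0$, which is exactly Kirchhoff's law \eqref{coplanar:eqn:kirchhoff:inicial}, while the constant $A=\bar{c}_{\sigma,i}$ is left undetermined by the current (it is instead set by the total amount of species in the cell).

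I expect the main obstacle to be the careful handling of the zeroth mode: it is the one place where the steady-state hypothesis does more than simplify the equation — it generates the nontrivial compatibility condition, so one must argue that a $z$-independent harmonic mode combined with a no-flux roof \emph{requires} zero net flux at the floor, rather than merely deriving a profile. A secondary bookkeeping point is tracking the $\mp/\pm$ sign convention between $O$ and $R$ through the bottom boundary condition so that the final $\pm$ in \eqref{coplanar:eqn:ci} is correct. Justifying termwise differentiation of the Fourier series is routine given the smoothness of a harmonic steady-state profile and I would relegate it to a remark.
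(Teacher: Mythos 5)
Your proposal is correct and follows essentially the same route as the paper: Fourier expansion in $x$ reduces the steady-state Laplace equation to a mode-by-mode ODE in $z$, the roof condition selects the $\cosh(\sqrt{s}\,(H-z))$ branch, the bottom condition fixes its amplitude to give $G$, and the zeroth mode's affine solution combined with the no-flux roof forces $\fourier j_{i}(0)=0$ as a solvability condition. Your treatment of the $n_{x}=0$ compatibility condition is in fact slightly more explicit than the paper's, which simply attributes $\fourier j_{i}(0)=0$ to Fick's law.
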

See Supplementary Information \S\ref{transforms} for the definition used for the Fourier coefficients.

\begin{proof}
	Since the initial concentration comes from a previous steady state,
	it satifies the diffusion equation in Eqs. \eqref{coplanar:eqn:difusion:idae}
	with $\partial c_{\sigma}/ \partial t = 0$
	\begin{subequations}
		\label{coplanar:eqn:difusion:estacionaria}
		\begin{align}
			\parderiv{^2 c_{\sigma,i}}{x^2}(x,z) +
			\parderiv{^2 c_{\sigma,i}}{z^2}(x,z) &= 0
			\\
			\mp D_{\sigma} \parderiv{c_{\sigma,i}}{z}(x,H) &= 0
			\\
			\mp D_{\sigma} \parderiv{c_{\sigma,i}}{z}(x,0) &= \frac{j_{i}(x)}{F n_{e}}
			\\
			c_{\sigma,i}(x, z) &= c_{\sigma,i}(x + p_{x}, z)
		\end{align}
	\end{subequations}
	Taking the Fourier coefficients $\fourier c_{\sigma,i}(n_{x},z)$
	from the diffusion equation, one obtains
	\begin{subequations}
		\begin{align}
			- n_{x}^{2} \frac{4\pi^{2}}{p_{x}^{2}} \fourier c_{\sigma,i}(n_{x},z)
			+ \parderiv{^2 \fourier c_{\sigma,i}}{z^2}(n_{x},z) &= 0 \\
			\mp D_{\sigma} \parderiv{\fourier c_{\sigma,i}}{z}(n_{x},H) &= 0 \\
			\mp D_{\sigma} \parderiv{\fourier c_{\sigma,i}}{z}(n_{x},0)
			&= \frac{\fourier j_{i}(n_{x})}{F n_{e}}
		\end{align}
	\end{subequations}
	which corresponds to a linear ordinary differential equation (ODE),
	thus it can be solved using well known techniques.

	In case $n_{x} \neq 0$, solving the ODE in terms of the Fourier coefficients leads to
	\begin{subequations}
		\begin{equation}
			\fourier c_{\sigma,i}(n_{x}, z) =
			a(n_{x}) \cosh\!\left(n_{x} \frac{2\pi}{p_{x}} (H-z)\right) +
			b(n_{x}) \sinh\!\left(n_{x} \frac{2\pi}{p_{x}} (H-z)\right)
		\end{equation}
		Later, by applying the boundary conditions, the desired result is obtained
		\begin{equation}
			\fourier c_{\sigma,i}(n_{x}, z) =
			\pm G\!\left(H-z,\, n_{x}^{2} \frac{4\pi^{2}}{p_{x}^{2}}\right)
			\frac{\fourier j_{i}(n_{x})}{F n_{e} D_{\sigma}}
		\end{equation}
	\end{subequations}
	where $G(z,s)$ is defined in Eq. (\ref{coplanar:eqn:G}).

	In case $n_{x} = 0$, the solution of the ODE in terms of $\fourier c_{\sigma,i}(0, z)$
	leads to a real constant independent of $z$, name it $\bar{c}_{\sigma,i}$,
	and the Fourier coefficient of the current density equals zero due to Fick's law
	\begin{equation}
		\fourier c_{\sigma,i}(0,z) = \bar{c}_{\sigma,i},
		\quad
		\fourier j_{i}(0) = 0
	\end{equation}
\end{proof}

The first \emph{conservation} property that can be obtained from the previous theorem
holds for the horizontal average of $c_{\sigma,i}(x,z)$ at any $z$ in the cell

\begin{corolario}
	\label{coplanar:cor:bar_ci}
	Assume a two-dimensional periodic cell as in \S\ref{coplanar:def:problema},
	where the initial concentration $c_{\sigma,i}(x,z)$ of species $\sigma \in \{O, R\}$
	comes from a previous steady state.
	The average of the initial concentration, along any horizontal line,
	is independent of $z$ and equals $\bar{c}_{\sigma,i}$
	\begin{equation}
		\label{coplanar:eqn:bar_ci}
		\frac{1}{p_{x}} \int_{-p_{x}/2}^{+p_{x}/2} c_{\sigma,i}(x,z) \ud{x} =
		\fourier c_{\sigma,i}(0,z) = \bar{c}_{\sigma,i}
	\end{equation}
	where $p_{x}$ is one period of the cell.
\end{corolario}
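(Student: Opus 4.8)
The plan is to observe that this corollary is an almost immediate consequence of Theorem~\ref{coplanar:teo:ci}, since the horizontal average over one period is precisely the zeroth Fourier coefficient. First I would recall the definition of the Fourier coefficients used throughout the paper (given in the Supplementary Information), under which the $n_{x}=0$ coefficient of any $p_{x}$-periodic function $f(x)$ is its average over one period, i.e. $\fourier f(0) = \frac{1}{p_{x}}\int_{-p_{x}/2}^{+p_{x}/2} f(x)\ud{x}$. This normalization is already implicit in Eq.~\eqref{coplanar:eqn:kirchhoff:inicial}, where $p_{x}\,\fourier j_{i}(0)$ is identified with $\int_{-p_{x}/2}^{+p_{x}/2} j_{i}(x)\ud{x}$. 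Applying this definition to $c_{\sigma,i}(x,z)$, regarded as a function of $x$ at each fixed $z$, establishes the first equality in Eq.~\eqref{coplanar:eqn:bar_ci}.

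The second equality then follows directly from the $n_{x}=0$ branch of Eq.~\eqref{coplanar:eqn:ci} in Theorem~\ref{coplanar:teo:ci}, which asserts $\fourier c_{\sigma,i}(0,z) = \bar{c}_{\sigma,i}$ with $\bar{c}_{\sigma,i}$ a real constant independent of $z$. Since the theorem already furnishes this value and its $z$-independence, no further computation is required; the corollary simply reinterprets the zeroth-mode result as a statement about horizontal averages.

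There is no genuine analytical obstacle here, as the substantive work has been done in the theorem. The only point demanding care is the bookkeeping of the normalization convention, so that the zeroth Fourier coefficient is literally the horizontal average rather than a rescaled version of it. Once that convention is fixed, both equalities are immediate and the independence from $z$ is inherited \emph{verbatim} from Theorem~\ref{coplanar:teo:ci}.
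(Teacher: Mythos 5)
Your proposal is correct and matches the paper's treatment: the paper gives no explicit proof for this corollary, presenting it as an immediate consequence of Theorem~\ref{coplanar:teo:ci}, exactly as you argue. Your identification of the zeroth Fourier coefficient with the horizontal average under the normalization of Eq.~\eqref{transforms:eqn:fourier}, followed by the $n_{x}=0$ branch of Eq.~\eqref{coplanar:eqn:ci}, is precisely the intended reasoning.
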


The second \emph{conservation} property holds for the weighted sum of concentrations at any point in the cell,
which translates into the conservation of the total concentration at any point in the cell
when the diffusion coefficients of both electrochemical species are equal

\begin{corolario}
	\label{coplanar:cor:ci:total}
	Assume a two-dimensional periodic cell with period $p_{x}$ as in \S\ref{coplanar:def:problema},
	where the initial concentration $c_{\sigma,i}(x,z)$ comes from a previous steady state. The following weighted sum of the initial concentrations is independent of $(x,z)$ and equals
	\begin{equation}
		\label{coplanar:eqn:ci_total}
		D_{O} c_{O,i}(x,z) + D_{R} c_{R,i}(x,z) =
		D_{O} \bar{c}_{O,i} + D_{R} \bar{c}_{R,i}
	\end{equation}
\end{corolario}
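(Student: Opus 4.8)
The plan is to work entirely in Fourier space and exploit the linearity of the operator $\fourier$ together with the explicit formula already obtained in Theorem~\ref{coplanar:teo:ci}. Since $\fourier$ is linear, the Fourier coefficients of the weighted sum factor as
\begin{equation*}
\fourier\bigl[D_O c_{O,i} + D_R c_{R,i}\bigr](n_x, z)
= D_O \fourier c_{O,i}(n_x, z) + D_R \fourier c_{R,i}(n_x, z),
\end{equation*}
so it suffices to evaluate this expression mode by mode and then invert.

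The crucial step is the case $n_x \neq 0$, where the two contributions cancel. Substituting the formula from Eq.~\eqref{coplanar:eqn:ci}, and recalling that the upper sign corresponds to $\sigma = O$ and the lower to $\sigma = R$, the oxidized term carries a factor $+1/D_O$ and the reduced term a factor $-1/D_R$; moreover the kernel $G\!\left(H-z,\, n_x^{2} \frac{4\pi^{2}}{p_x^{2}}\right)$ and the coefficient $\fourier j_i(n_x)$ are common to both species. Multiplying each term by its prescribed weight $D_\sigma$ therefore removes the diffusion coefficient entirely, leaving
\begin{equation*}
D_O \cdot (+1)\,\frac{G\, \fourier j_i(n_x)}{F n_e}
+ D_R \cdot (-1)\,\frac{G\, \fourier j_i(n_x)}{F n_e} = 0 ,
\end{equation*}
where $G$ abbreviates the common kernel. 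Thus every nonzero Fourier mode of the weighted sum vanishes identically in $z$.

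For $n_x = 0$ the theorem gives the $z$-independent constants $\bar{c}_{\sigma,i}$, so the zeroth mode equals $D_O \bar{c}_{O,i} + D_R \bar{c}_{R,i}$. Having shown that the only surviving Fourier coefficient is this constant $n_x = 0$ term, the inverse transform reconstructs a function independent of $x$ (a single mode) and independent of $z$ (that mode being $z$-independent), which is precisely Eq.~\eqref{coplanar:eqn:ci_total}. The main thing to watch is the sign bookkeeping in the $\pm$ convention and the confirmation that the weight $D_\sigma$ exactly cancels the $1/D_\sigma$ appearing in Theorem~\ref{coplanar:teo:ci}; once these are tracked correctly the cancellation is automatic, so there is no substantial obstacle beyond this careful bookkeeping.
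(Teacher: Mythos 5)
Your proof is correct and follows exactly the paper's argument: form the weighted sum of the Fourier coefficients from Theorem~\ref{coplanar:teo:ci}, observe that the $\pm$ signs and the $D_\sigma$ weights make all $n_x \neq 0$ modes cancel while the $n_x = 0$ mode gives $D_O \bar{c}_{O,i} + D_R \bar{c}_{R,i}$, and then take the Fourier series. The sign and $1/D_\sigma$ bookkeeping you flag is handled correctly.
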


\begin{proof}
	Take the weighted sum of both Fourier coefficients
	\begin{equation}
		D_{O} \fourier c_{O,i}(n_{x},z) + D_{R} \fourier c_{R,i}(n_{x},z) =
		\left\{
			\begin{array}{ll}
				D_{O} \bar{c}_{O,i} + D_{R} \bar{c}_{R,i}, & n_{x} = 0 \\
				0, & n_{x} \neq 0
			\end{array}
		\right.
	\end{equation}
	and later, take its Fourier series.
\end{proof}

\subsection{Properties of the concentration in transient state}

By using the Laplace transform and the Fourier coefficients
on the change in concentration $\Delta c_{\sigma}(x,z,t) = c_{\sigma}(x,z,t) - c_{\sigma,i}(x,z)$,
one can derive similar properties as in the previous section,
but now for the transient state.

\begin{teorema}
	\label{coplanar:teo:Dc}
	Consider the periodic cell described in \S\ref{coplanar:def:problema}.
	If the initial concentration $c_{\sigma,i}(x,z)$ of species $\sigma \in \{O, R\}$
	comes from a previous steady state,
	then the Laplace transform of the Fourier coefficients
	of $\Delta c_{\sigma}(x,z,t) = c_{\sigma}(x,z,t) - c_{\sigma,i}(x,z)$ is given by
	\begin{equation}
		\label{coplanar:eqn:Dc}
		\laplace \fourier \Delta c_{\sigma}(n_{x},z,s) =
		\pm G\!\left(H-z,\, \frac{s}{D_{\sigma}} + n_{x}^{2} \frac{4\pi^{2}}{p_{x}^{2}}\right)
		\frac{\laplace \fourier \Delta j(n_{x},s)}{F n_{e} D_{\sigma}}
	\end{equation}
	where $p_{x}$ is one period of the cell,
	$G(z,s)$ is defined in Eq. (\ref{coplanar:eqn:G}),
	and $\laplace\fourier \Delta j(n_{x},s)$ is the Laplace transform
	of the Fourier coefficients of $\Delta j(x,t) = j(x,t) - j_{i}(x)$.
\end{teorema}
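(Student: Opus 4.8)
The plan is to mirror the proof of Theorem~\ref{coplanar:teo:ci}, now carrying the time variable through a Laplace transform. First I would subtract the steady-state system~\eqref{coplanar:eqn:difusion:estacionaria} from the transient system~\eqref{coplanar:eqn:difusion:idae}. Because $c_{\sigma,i}$ is time-independent we have $\parderiv{c_{\sigma}}{t} = \parderiv{\Delta c_{\sigma}}{t}$, so $\Delta c_{\sigma}$ obeys the \emph{homogeneous} diffusion equation $\tfrac{1}{D_{\sigma}}\parderiv{\Delta c_{\sigma}}{t} = \parderiv{^2 \Delta c_{\sigma}}{x^2} + \parderiv{^2 \Delta c_{\sigma}}{z^2}$. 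Subtracting the boundary conditions pairwise yields a homogeneous Neumann condition at $z=H$, the condition $\mp D_{\sigma}\parderiv{\Delta c_{\sigma}}{z}(x,0,t) = \Delta j(x,t)/(Fn_{e})$ at $z=0$, and periodicity in $x$. The decisive feature is the initial condition: $\Delta c_{\sigma}(x,z,0) = c_{\sigma,i}(x,z) - c_{\sigma,i}(x,z) = 0$.

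Second, I would apply the Fourier coefficients in $x$ (replacing $\partial^{2}/\partial x^{2}$ by multiplication by $-n_{x}^{2}\,4\pi^{2}/p_{x}^{2}$) and the Laplace transform in $t$. Here the vanishing initial condition pays off: the Laplace transform of $\parderiv{\Delta c_{\sigma}}{t}$ reduces to $s\,\laplace\fourier\Delta c_{\sigma}$ with no initial-value term. The PDE then collapses to the linear ODE in $z$
\[
	\parderiv{^2}{z^2}\laplace\fourier\Delta c_{\sigma}(n_{x},z,s) = \left(\frac{s}{D_{\sigma}} + n_{x}^{2}\frac{4\pi^{2}}{p_{x}^{2}}\right)\laplace\fourier\Delta c_{\sigma}(n_{x},z,s),
\]
which is structurally identical to the one in Theorem~\ref{coplanar:teo:ci} but with the effective separation parameter $S = s/D_{\sigma} + n_{x}^{2}\,4\pi^{2}/p_{x}^{2}$ in place of $n_{x}^{2}\,4\pi^{2}/p_{x}^{2}$.

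Third, I would solve this ODE as $a\cosh(\sqrt{S}\,(H-z)) + b\sinh(\sqrt{S}\,(H-z))$, enforce the transformed homogeneous Neumann condition at $z=H$ to obtain $b=0$, and enforce the $z=0$ condition to fix $a$; this reproduces exactly the factor $G(H-z,S)$ of Eq.~\eqref{coplanar:eqn:G} and yields~\eqref{coplanar:eqn:Dc}. A point worth flagging is that, unlike Theorem~\ref{coplanar:teo:ci}, no separate $n_{x}=0$ case is required: since $s\neq 0$ the parameter $S$ never vanishes, so the ODE stays non-degenerate and the single formula covers all $n_{x}$.

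The computation itself is routine once set up; the main obstacle is therefore not algebraic but in the bookkeeping and justification. I would be most careful about (i) confirming that $\partial_{t}$ acting on $c_{\sigma}$ and on $\Delta c_{\sigma}$ coincide, so that subtracting the steady state genuinely homogenizes the equation and supplies the zero initial data; (ii) the $\pm/\mp$ sign conventions surviving the subtraction of the two $z=0$ boundary conditions; and (iii) justifying that the Fourier and Laplace transforms commute with the spatial derivatives, valid under the regularity implicitly assumed for the transforms to exist. None of these is a deep difficulty, but they are precisely where an error would most likely creep in.
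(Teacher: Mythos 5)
Your proposal is correct and follows essentially the same route as the paper's proof: subtract the steady-state system to homogenize the initial condition, apply Fourier coefficients in $x$ and the Laplace transform in $t$, and solve the resulting ODE in $z$ as a combination of $\cosh$ and $\sinh$ in $(H-z)$ with the boundary conditions fixing the coefficients to yield $G$. Your added observation that no separate $n_{x}=0$ case is needed (since $s/D_{\sigma}$ keeps the parameter nonzero) is a correct and worthwhile clarification consistent with the paper.
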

See Supplementary Information \S\ref{transforms} for the definitions of the Fourier coefficients and the Laplace transform used in the previous theorem.

\begin{proof}
	First, substract Eqs. (\ref{coplanar:eqn:difusion:idae})
	and (\ref{coplanar:eqn:difusion:estacionaria})
	to obtain the following partial differential equation
	\begin{subequations}
		\begin{align}
			\frac{1}{D_{\sigma}} \parderiv{\Delta c_{\sigma}}{t}(x,z,t)
			&= \parderiv{^2 \Delta c_{\sigma}}{x^2}(x,z,t)
			+ \parderiv{^2 \Delta c_{\sigma}}{z^2}(x,z,t)
			\\
			\Delta c_{\sigma}(x,z,0^{-}) &= 0
			\\
			\mp D_{\sigma} \parderiv{\Delta c_{\sigma}}{z}(x,H,t) &= 0
			\\
			\mp D_{\sigma} \parderiv{\Delta c_{\sigma}}{z}(x,0,t)
			&= \frac{\Delta j(x,t)}{F n_{e}}
			\\
			\Delta c_{\sigma}(x, z, t) &= \Delta c_{\sigma}(x + p_{x}, z, t)
		\end{align}
	\end{subequations}
	which depends on the changes of concentration and current density
	with respect to the initial condition.

	By taking the Fourier coefficients in $x$ and the Laplace transform in $t$,
	one can convert this problem into an ordinary differential equation
	\begin{subequations}
		\begin{align}
			\left(
				\frac{s}{D_{\sigma}} + n_{x}^{2} \frac{4\pi^{2}}{p_{x}^{2}}
			\right)
			\laplace \fourier \Delta c_{\sigma}(n_{x},z,s)
			&= \parderiv{^2 \laplace \fourier \Delta c_{\sigma}}{z^2}(n_{x},z,s)
			\\
			\mp D_{\sigma} \parderiv{\laplace \fourier \Delta c_{\sigma}}{z}(n_{x},H,s)
			&= 0
			\\
			\mp D_{\sigma} \parderiv{\laplace \fourier \Delta c_{\sigma}}{z}(n_{x},0,s)
			&= \frac{\laplace \fourier \Delta j(n_{x},s)}{F n_{e}}
		\end{align}
	\end{subequations}
	of which its solution
	\begin{multline}
		\laplace \fourier \Delta c_{\sigma}(n_{x},z,s) =
		A(n_{x},s) \cosh\!\left(
			\sqrt{\frac{s}{D_{\sigma}} + n_{x}^{2} \frac{4\pi^{2}}{p_{x}^{2}}}
			(H - z)
		\right)
		\\
		+ B(n_{x},s) \sinh\!\left( 
			\sqrt{\frac{s}{D_{\sigma}} + n_{x}^{2} \frac{4\pi^{2}}{p_{x}^{2}}}
			(H - z)
		\right)
	\end{multline}
	after applying the boundary conditions, 
	is given by Eq. (\ref{coplanar:eqn:Dc}),
	where $G(z,s)$ is defined in Eq. (\ref{coplanar:eqn:G}).
\end{proof}

Before obtaining the properties for the concentration in transient state,
it is useful to obtain the time-domain counterparts of the frequency-domain function $G(z,s)$ in Eq. \eqref{coplanar:eqn:G}.
\begin{lema}
	\label{coplanar:lem:g-h}
	Consider the transfer function $G(z,s)$
	which is defined in Eq. (\ref{coplanar:eqn:G}).
	The inverse Laplace transform $g(z,t) = \laplace^{-1} G(z,s)$
	is given by
	\begin{equation}
		\label{coplanar:eqn:g}
		g(z,t)
		= \frac{1}{H}
		\left[
			1 + 2 \sum_{k=1}^{+\infty} (-1)^{k}
			\cos\!\left( k \frac{\pi}{H} z \right)
			\exp\!\left( -k^{2} \frac{\pi^{2}}{H^{2}} t \right)
		\right]
	\end{equation}
	where the argument of its exponential factors correspond to the poles of $G(z,s)$.
	Note that $g(z,t) = H^{-1} \theta_{4}(z\pi/2H|\bm{i}\pi t/H^{2})$ 
	\citex{Eq. (\dlmf[E]{20.10.}{5})}{dlmf} is related to the \emph{elliptic theta function}
	$\theta_{4}(z|\tau) = \theta_{4}(z,q)$ \citex{Eq. (\dlmf[E]{20.2.}{4})}{dlmf}
	where $q = \exp(\bm{i}\pi\tau)$ \cite[\S\dlmf{20.}{1}]{dlmf}.

	And the inverse Laplace transform $h(z,t) = \laplace^{-1} G(z,s)^{-1}$ is given by
	\begin{equation}
		\label{coplanar:eqn:h}
		h(z,t) =
		\frac{2}{z} \sum_{\ell=1}^{+\infty} (2\ell - 1)^{2} \frac{\pi^{2}}{4 z^{2}} (-1)^{\ell}
		\sin\!\left( (2\ell - 1) \frac{\pi H}{2z} \right)
		\exp\!\left( -(2\ell - 1)^{2} \frac{\pi^{2}}{4 z^{2}} t \right)
	\end{equation}
	where the argument of its exponential factors corresponds to the zeros of $G(z,s)$.
	Note that $h(z,t) = z^{-1} \dot{\theta}_{1}(H\pi/2z|\bm{i}\pi t/z^{2})$
	\citex{Eq. (\dlmf[E]{20.10.}{4})}{dlmf} 
	is related to time derivative of the \emph{elliptic theta function}
	$\theta_{1}(z|\tau) = \theta_{1}(z,q)$ \citex{Eq. (\dlmf[E]{20.2.}{1})}{dlmf}
	where $q = \exp(\bm{i}\pi\tau)$ \cite[\S\dlmf{20.}{1}]{dlmf}.
\end{lema}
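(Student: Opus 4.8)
The plan is to treat both $G(z,s)$ and its reciprocal as single-valued meromorphic functions of $s$ and to recover the time-domain functions as sums of residues of $G(z,s)\e^{st}$ (respectively $G(z,s)^{-1}\e^{st}$) over their poles, invoking the residue form of the inverse Laplace transform. The first step, which makes the residue calculus legitimate, is to observe that although $\sqrt{s}$ is multivalued, both $\cosh(\sqrt{s}\,z)$ and $\sinh(\sqrt{s}\,H)/\sqrt{s}$ are even in $\sqrt{s}$ and therefore \emph{entire} functions of $s$. Hence $G(z,s)$ and $G(z,s)^{-1}$ are ratios of entire functions, i.e.\ genuinely single-valued meromorphic functions of $s$ with no branch cut along the negative real axis, so the Bromwich contour may be closed without tracking the square-root branch.

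For $g(z,t)=\laplace^{-1}G(z,s)$ I would first locate the poles of $G$ as the zeros of $\sqrt{s}\sinh(\sqrt{s}\,H)$, namely $s=0$ together with $s_{k}=-k^{2}\pi^{2}/H^{2}$ for $k\ge 1$; a short Taylor expansion near the origin, $G(z,s)\sim 1/(sH)$, shows that $s=0$ is a simple pole with residue $1/H$, which supplies the constant term. At the remaining simple poles the residue is computed as $N(s_{k})/D'(s_{k})$ with $N(s)=\cosh(\sqrt{s}\,z)$ and $D(s)=\sqrt{s}\sinh(\sqrt{s}\,H)$; using $\sinh(\bm{i}k\pi)=0$ and $\cosh(\bm{i}k\pi)=(-1)^{k}$ one gets $D'(s_{k})=\tfrac12 H(-1)^{k}$ and hence residue $\tfrac{2(-1)^{k}}{H}\cos(k\pi z/H)$. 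Multiplying each residue by $\e^{s_{k}t}$ and summing reproduces Eq.~\eqref{coplanar:eqn:g} exactly.

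For $h(z,t)=\laplace^{-1}G(z,s)^{-1}$ the poles are the zeros of $\cosh(\sqrt{s}\,z)$, i.e.\ $s_{\ell}=-(2\ell-1)^{2}\pi^{2}/(4z^{2})$ for $\ell\ge 1$; note that $s=0$ is a \emph{zero} of $G^{-1}$, not a pole, consistent with the absence of a constant term in $h$. Writing $N(s)=\sqrt{s}\sinh(\sqrt{s}\,H)$ and $D(s)=\cosh(\sqrt{s}\,z)$ and evaluating at $\sqrt{s_{\ell}}\,z=\bm{i}(2\ell-1)\pi/2$ (so $\sin((2\ell-1)\pi/2)=(-1)^{\ell-1}$), a parallel computation of $N(s_{\ell})/D'(s_{\ell})$ yields the residue $\tfrac{2}{z}(2\ell-1)^{2}\tfrac{\pi^{2}}{4z^{2}}(-1)^{\ell}\sin\!\big((2\ell-1)\pi H/2z\big)$, which after insertion of the factors $\e^{s_{\ell}t}$ gives Eq.~\eqref{coplanar:eqn:h}. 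The super-exponential decay $\exp(-(2\ell-1)^{2}\pi^{2}t/4z^{2})$ dominates the polynomial prefactor, so the series converges for every $t>0$ despite the growth of $G^{-1}$ itself.

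The main obstacle is justifying that the Bromwich integral indeed equals the residue sum. For $g$, where $G(z,s)\to 0$ at infinity, one closes the contour to the left along expanding arcs chosen to pass between consecutive poles and estimates the integrand there; this is routine but requires the bound to hold uniformly for $0\le z\le H$. For $h$ the situation is more delicate, since $G(z,s)^{-1}$ \emph{grows} at infinity: here I would use expanding contours lying strictly between neighbouring poles, bound $G^{-1}$ on them away from the zeros of $\cosh(\sqrt{s}\,z)$, and rely on the $\e^{st}$ factor with $t>0$ to force the contour contributions to vanish. Equivalently, the entire argument can be recast as a Mittag--Leffler partial-fraction expansion of $G$ (resp.\ $G^{-1}$) inverted term by term, which replaces the arc estimate by a growth estimate justifying the expansion. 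Finally, matching the two resulting series to $H^{-1}\theta_{4}(z\pi/2H\,|\,\bm{i}\pi t/H^{2})$ and $z^{-1}\dot\theta_{1}(H\pi/2z\,|\,\bm{i}\pi t/z^{2})$ is a direct recognition step against the DLMF theta-function series.
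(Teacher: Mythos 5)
Your residue computations are correct — I checked both: the simple pole of $G$ at $s=0$ gives $1/H$, the poles at $s_k=-k^2\pi^2/H^2$ give $\tfrac{2(-1)^k}{H}\cos(k\pi z/H)$, and the poles of $G^{-1}$ at $s_\ell=-(2\ell-1)^2\pi^2/(4z^2)$ give exactly the coefficient $\tfrac{(2\ell-1)^2\pi^2}{2z^3}(-1)^\ell\sin((2\ell-1)\pi H/2z)$ appearing in Eq.~\eqref{coplanar:eqn:h} — but your route is genuinely different from the paper's. The paper's proof is a two-line table lookup: it recognizes $H\,G(z,s)$ and $z\,G(z,s)^{-1}/s$ as the tabulated Laplace transforms of $\theta_4$ and $\theta_1$ (DLMF 20.10.4--5), applies $\laplace \dot f = s\laplace f$, and reads the series \eqref{coplanar:eqn:g}--\eqref{coplanar:eqn:h} off the defining $q$-series of the theta functions. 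Your Bromwich/residue argument is self-contained, needs no special-function tables, and directly exhibits the pole/zero structure that the lemma statement advertises; the paper's version is shorter and outsources all convergence questions to the DLMF. The one place where your sketch is materially weaker than you acknowledge is the inversion of $G^{-1}$: since $G(z,s)^{-1}\sim\sqrt{s}\,\e^{\sqrt{s}(H-z)}$ as $s\to+\infty$, the Bromwich integral does not converge classically, the Mittag--Leffler sum $\sum_\ell r_\ell/(s-s_\ell)$ does not even converge (the residues grow like $\ell^2$ while $s_\ell\sim-\ell^2$), and the residue series represents $h$ only for $t>0$, missing a singular contribution supported at $t=0^+$ (visible in the divergence of the series as $t\to0^+$). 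So ``rely on the $\e^{st}$ factor'' cannot be made to work on the whole arc near $\arg s=\pm\pi/2$. To be fair, the paper has the same looseness hidden in applying $\laplace\dot\theta_1=s\laplace\theta_1$ without an initial-value term (and the DLMF identity it invokes requires $H\le z$, i.e.\ effectively $z=H$, the only value at which $h$ is ever used); if you want your version airtight for $h$, state it as an identity of distributions on $t>0$, or equivalently prove the $\theta_1$ Laplace pair first and differentiate, which is essentially what the paper does.
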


\begin{proof}
	From \citex{Eq. (\dlmf[E]{20.10.}{5})}{dlmf}
	\begin{equation}
		\laplace \theta_{4}\!\left( \frac{z\pi}{2H} \left| \frac{\bm{i}\pi t}{H^{2}} \right. \right)
		= H G(z,s)
	\end{equation}
	Finally, we let $g(z,t) = H^{-1} \theta_{4}(z\pi/2H|\bm{i}\pi t/H^{2})$.
	
	From \citex{Eq. (\dlmf[E]{20.10.}{4})}{dlmf} and
	the property of the Laplace transform of the time derivative
	\begin{equation}
		\laplace \deriv{}{t} \theta_{1}\!\left( \frac{H\pi}{2z} \left| \frac{\bm{i}\pi t}{z^{2}} \right. \right)
		= s\, \laplace \theta_{1}\!\left( \frac{H\pi}{2z} \left| \frac{\bm{i}\pi t}{z^{2}} \right. \right)
		= z\, G(z,s)^{-1}
	\end{equation}
	Finally, we let $h(z,t) = z^{-1} \dot{\theta}_{1}(H\pi/2z|\bm{i}\pi t/z^{2})$.
\end{proof}

The first transient property that can be obtained from Theorem \ref{coplanar:teo:Dc}
holds for the horizontal average of $c_{\sigma}(x,z,t)$.
Note that unlike Corollary \ref{coplanar:cor:bar_ci},
the Corollary below shows that the horizontal average is not uniform along $z$
and also changes with time.

\begin{corolario}
	\label{coplanar:cor:delta_bar_c}
	Consider the periodic cell described in \S\ref{coplanar:def:problema}
	and assume that the initial concentration $c_{\sigma,i}(x,z)$ comes from a previous steady state.

	The average of the concentration, along any horizontal line, equals
	\begin{subequations}
		\label{coplanar:eqn:bar_c}
		\begin{gather}
			\frac{1}{p_{x}}
			\int_{-p_{x}/2}^{+p_{x}/2} c_{\sigma}(x,z,t) \ud{x}
			= \bar{c}_{\sigma,i} \pm \Delta \bar{c}_{\sigma}(z,t)
			\\
			\intertext{%
				where the change in average concentration $\Delta\bar{c}_{\sigma}(z,t)$
				depends on $z$, $t$, the electrochemical species $\sigma$,
				and the net current in a period of the cell
			}
			\label{coplanar:eqn:delta_bar_c}
			\Delta \bar{c}_{\sigma}(z,t)
			= g(H-z, D_{\sigma}t) * \frac{1}{p_{x}}
			\int_{-p_{x}/2}^{+p_{x}/2} \frac{j(x,t)}{F n_{e}} \ud{x}
		\end{gather}
	\end{subequations}
	and where $g(z,t) = \laplace^{-1} G(z,s)$ is given in Eq. (\ref{coplanar:eqn:g}).

	Conversely,
	the average current density (net current) in one period of the cell
	is dependent on $t$,
	and on the change in average concentration $\Delta\bar{c}_{\sigma}(0,t)$
	at the bottom of cell (where the electrodes are located)
	\begin{equation}
		\label{coplanar:eqn:i_net}
		\frac{1}{p_{x}} \int_{-p_{x}/2}^{+p_{x}/2} j(x,t) \ud{x}
		= F n_{e} D_{\sigma}^{2}\, h(H, D_{\sigma} t) * \Delta\bar{c}_{\sigma} (0,t)
	\end{equation}
	where $h(z,t) = \laplace^{-1} G(s,z)^{-1}$ is given by Eq. (\ref{coplanar:eqn:h}).

	In both cases, $*$ is the time convolution, $p_{x}$ is one period of the cell, and
	$\bar{c}_{\sigma,i}$ is the horizontal average of the initial concentration,
	see Eq. (\ref{coplanar:eqn:bar_ci}).
\end{corolario}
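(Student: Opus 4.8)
The plan is to observe that the horizontal average over one period is precisely the zeroth Fourier coefficient, so that the entire statement reduces to specializing Theorem \ref{coplanar:teo:Dc} to $n_{x}=0$, feeding in the inverse transforms supplied by Lemma \ref{coplanar:lem:g-h}, and keeping track of the Laplace scaling rule $\laplace^{-1}\{F(s/a)\}(t) = a\, f(at)$ for $a>0$. Concretely, I would write $c_{\sigma} = c_{\sigma,i} + \Delta c_{\sigma}$ and take the zeroth Fourier coefficient of both sides, using that $\tfrac{1}{p_{x}}\int_{-p_{x}/2}^{+p_{x}/2} c_{\sigma}(x,z,t)\ud{x} = \fourier c_{\sigma}(0,z,t)$. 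By Corollary \ref{coplanar:cor:bar_ci} the initial part contributes the constant $\bar{c}_{\sigma,i}$, so the task collapses to identifying $\fourier \Delta c_{\sigma}(0,z,t)$ with $\pm\,\Delta\bar{c}_{\sigma}(z,t)$.

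To do that I would evaluate Theorem \ref{coplanar:teo:Dc} at $n_{x}=0$, where the spatial-frequency term $n_{x}^{2}\,4\pi^{2}/p_{x}^{2}$ vanishes, leaving
\begin{equation}
	\laplace \fourier \Delta c_{\sigma}(0,z,s)
	= \pm\, G\!\left(H-z,\ \frac{s}{D_{\sigma}}\right)
	\frac{\laplace \fourier \Delta j(0,s)}{F n_{e} D_{\sigma}}.
\end{equation}
Since $G(H-z,s)$ inverts to $g(H-z,t)$ by Lemma \ref{coplanar:lem:g-h}, the scaling rule gives $\laplace^{-1}\{G(H-z,\, s/D_{\sigma})\} = D_{\sigma}\, g(H-z, D_{\sigma}t)$; applying the convolution theorem to the product in $s$ and cancelling the $D_{\sigma}$ yields $\fourier \Delta c_{\sigma}(0,z,t) = \pm\, g(H-z, D_{\sigma}t) * \fourier \Delta j(0,t)/(F n_{e})$. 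Finally, Theorem \ref{coplanar:teo:ci} supplies $\fourier j_{i}(0)=0$, so $\fourier \Delta j(0,t) = \fourier j(0,t) = \tfrac{1}{p_{x}}\int_{-p_{x}/2}^{+p_{x}/2} j(x,t)\ud{x}$, which is exactly Eq. \eqref{coplanar:eqn:delta_bar_c}.

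For the converse I would specialize the result just obtained to $z=0$ and transform it back to the Laplace domain, so that $\laplace \Delta\bar{c}_{\sigma}(0,s) = D_{\sigma}^{-1} G(H,\, s/D_{\sigma})\, \laplace \fourier j(0,s)/(F n_{e})$, where the factor $D_{\sigma}^{-1}$ again comes from the scaling rule applied to $g(H, D_{\sigma}t)$. Solving this algebraically for $\laplace \fourier j(0,s)$ and inverting $G(H,s)^{-1}$ through $h(H,t)$ from Lemma \ref{coplanar:lem:g-h}—with the same scaling rule producing $\laplace^{-1}\{G(H,\, s/D_{\sigma})^{-1}\} = D_{\sigma}\, h(H, D_{\sigma}t)$—collects the prefactors into $F n_{e} D_{\sigma}^{2}$ and delivers Eq. \eqref{coplanar:eqn:i_net}. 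The only genuine hazard is bookkeeping: $D_{\sigma}$ enters both through the Laplace variable inside $G$ and through the time argument of $g$ and $h$, so I expect the main care to be spent verifying that the scaling factors cancel consistently (leaving no net $D_{\sigma}$ in the forward direction, and exactly $D_{\sigma}^{2}$ in the converse); beyond that the argument is a direct chain of the convolution theorem and the inversions already established.
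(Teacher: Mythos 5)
Your proof is correct and follows essentially the same route as the paper's: both specialize Theorem \ref{coplanar:teo:Dc} to $n_{x}=0$, invert via Lemma \ref{coplanar:lem:g-h} with the Laplace scaling rule so that the $D_{\sigma}$ factors cancel in the forward direction and combine into $D_{\sigma}^{2}$ in the converse, and use $\fourier j_{i}(0)=0$ to replace $\Delta j$ by $j$. The only cosmetic difference is that you set $z=0$ before returning to the Laplace domain for the converse, whereas the paper keeps $z$ general and sets $z=0$ at the very end.
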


\begin{proof}
	Take the expression for $\laplace\fourier \Delta c_{\sigma}(0,z,s)$ from Eq. (\ref{coplanar:eqn:Dc}).
	\begin{equation}
		\frac{1}{p_{x}}
		\int_{-p_{x}/2}^{+p_{x}/2} \laplace \Delta c_{\sigma}(x,z,s) \ud{x} =
		\pm G\!\left( H-z, \frac{s}{D_{\sigma}} \right)
		\frac{1}{p_{x}}
		\int_{-p_{x}/2}^{+p_{x}/2} \frac{\laplace\Delta j(x,s)}{F n_{e} D_{\sigma}} \ud{x}
	\end{equation}
	Let $g(z,t) = \laplace^{-1} G(z,s)$,
	then the previous equation can be written in time domain
	by applying the inverse Laplace transform
	together with the \emph{time scaling} property
	\begin{equation}
		\frac{1}{p_{x}} \int_{-p_{x}/2}^{+p_{x}/2} \Delta c_{\sigma}(x,z,t) \ud{x} =
		 \pm D_{\sigma} g(H-z, D_{\sigma}t) * \frac{1}{p_{x}}
		\int_{-p_{x}/2}^{+p_{x}/2} \frac{\Delta j(x,t)}{F n_{e} D_{\sigma}} \ud{x}
	\end{equation}
	By adding Eq. (\ref{coplanar:eqn:bar_ci}) to the previous equation,
	and later, by applying Eq. (\ref{coplanar:eqn:kirchhoff:inicial}),
	leads to Eqs. (\ref{coplanar:eqn:bar_c}).

	By taking Eq. (\ref{coplanar:eqn:delta_bar_c}) in Laplace domain
	and later by isolating the average current density (net current) one obtains
	\begin{equation}
		\frac{1}{p_{x}} \int_{-p_{x}/2}^{+p_{x}/2} \frac{\laplace j(x,s)}{D_{\sigma} F n_{e}} \ud{x}
		= G\!\left( H - z, \frac{s}{D_{\sigma}} \right)^{-1} \laplace \Delta\bar{c}_{\sigma}(z,s)
	\end{equation}
	Let $h(z,t) = \laplace^{-1} G(z,s)^{-1}$,
	then the previous equation can be written in time domain
	by applying the inverse Laplace transform
	together with the \emph{time scaling} property
	\begin{equation}
		\frac{1}{p_{x}} \int_{-p_{x}/2}^{+p_{x}/2} \frac{j(x,t)}{D_{\sigma} F n_{e}} \ud{x}
		= D_{\sigma}\, h(H - z, D_{\sigma} t) * \Delta\bar{c}_{\sigma} (z,t)
	\end{equation}
	Since the average current density is independent of $z$,
	it suffices to take $z=0$, leading to Eq. (\ref{coplanar:eqn:i_net}).
\end{proof}

The second transient property is a \emph{conservation} property,
and holds for the total concentration at any point in the cell,
and any time $t \geq 0$.

\begin{corolario}
	\label{coplanar:cor:ct:total}
	Consider the periodic cell with period $p_{x}$,
	described in \S\ref{coplanar:def:problema},
	and assume that the initial concentration $c_{\sigma,i}(x,z)$ comes from a previous steady state.
	If the diffusion coefficients of both species are equal $D_{O} = D_{R}$,
	then the sum of the concentrations at any point in the cell is independent of $(x,z,t)$ and equals
	\begin{equation}
		\label{coplanar:eqn:ct:total}
		c_{O}(x,z,t) + c_{R}(x,z,t) = \bar{c}_{O,i} + \bar{c}_{R,i}
	\end{equation}
	where $\bar{c}_{\sigma,i}$ with $\sigma \in \{O,R\}$ is given in Eq. (\ref{coplanar:eqn:bar_ci}).
\end{corolario}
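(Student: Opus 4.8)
The plan is to show that the transient change in the total concentration vanishes identically when $D_{O} = D_{R}$, and then to fix the constant value using the steady-state conservation already established in Corollary~\ref{coplanar:cor:ci:total}. The whole argument hinges on the opposite sign convention for $O$ and $R$ in Theorem~\ref{coplanar:teo:Dc} combined with the collapse of the two transfer functions onto one another when the diffusion coefficients coincide.

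First I would invoke Theorem~\ref{coplanar:teo:Dc}, which gives the Laplace--Fourier representation of $\Delta c_{\sigma}$ in Eq.~\eqref{coplanar:eqn:Dc}. The crucial point is that the current density $j(x,t)$ is a single physical quantity shared by both species through the reaction stoichiometry, so $\laplace\fourier\Delta j(n_{x},s)$ is the same for $\sigma = O$ and $\sigma = R$. Setting $D_{O} = D_{R} =: D$, the argument $s/D_{\sigma} + n_{x}^{2}\, 4\pi^{2}/p_{x}^{2}$ of $G$ becomes identical for both species, as does the prefactor $1/(F n_{e} D)$. Hence the two expressions differ only through the sign $\pm$ dictated by the convention for $\sigma \in \{O,R\}$, and adding them gives
\[
	\laplace\fourier\Delta c_{O}(n_{x},z,s) + \laplace\fourier\Delta c_{R}(n_{x},z,s) = 0
\]
identically in $(n_{x},z,s)$.

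Next I would invert both transforms. Since the sum vanishes for every Fourier index $n_{x}$ and every $s$, applying the inverse Laplace transform and then reconstructing the Fourier series in $x$ yields $\Delta c_{O}(x,z,t) + \Delta c_{R}(x,z,t) = 0$ for all $(x,z,t)$. In other words, the total concentration never departs from its initial value $c_{O,i}(x,z) + c_{R,i}(x,z)$.

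Finally I would pin down that initial value. Corollary~\ref{coplanar:cor:ci:total}, specialized to $D_{O} = D_{R} = D$ and divided through by $D$, states precisely that $c_{O,i}(x,z) + c_{R,i}(x,z) = \bar{c}_{O,i} + \bar{c}_{R,i}$, independent of $(x,z)$. Combining this with the previous step gives
\[
	c_{O}(x,z,t) + c_{R}(x,z,t) = \left(c_{O,i} + c_{R,i}\right) + \left(\Delta c_{O} + \Delta c_{R}\right) = \bar{c}_{O,i} + \bar{c}_{R,i},
\]
which is Eq.~\eqref{coplanar:eqn:ct:total}. I do not expect a genuine obstacle here: the argument rests entirely on the observation that equal diffusion coefficients make the two transfer functions coincide, so the opposite-sign contributions of $O$ and $R$ cancel exactly. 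The only point demanding care is recognizing that $\Delta j$ is species-independent and therefore factors out cleanly before the cancellation is taken.
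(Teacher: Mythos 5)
Your proposal is correct and follows essentially the same route as the paper's own proof: sum Eq.~\eqref{coplanar:eqn:Dc} over both species so that the opposite signs cancel when $D_{O}=D_{R}$, invert the Laplace transform and Fourier series to get $\Delta c_{O}+\Delta c_{R}=0$, and then add the initial-state identity from Corollary~\ref{coplanar:cor:ci:total}. Your extra remarks about $\Delta j$ being species-independent and the transfer functions coinciding simply make explicit what the paper leaves implicit.
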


\begin{proof}
	If $D_{O} = D_{R}$, then the sum of Eq. (\ref{coplanar:eqn:Dc}) for both electrochemical species is
	\begin{equation}
		\laplace\fourier \Delta c_{O}(n_{x},z,s)
		+ \laplace\fourier \Delta c_{R}(n_{x},z,s) = 0
	\end{equation}
	Taking the inverse Laplace transform and later the Fourier series, one obtains
	\begin{equation}
		\Delta c_{O}(x,z,t) + \Delta c_{R}(x,z,t) = 0
	\end{equation}
	Finally, by adding Eq. (\ref{coplanar:eqn:ci_total}), Eq. (\ref{coplanar:eqn:ct:total}) is obtained.
\end{proof}

\subsection{Properties of the final concentration in steady state}

\emph{Conservation} properties similar to those in \S\ref{coplanar:properties:initial}
also hold for the final concentration in steady state,
which araise as direct consequences of the theorem stated below.

\begin{teorema}
	\label{coplanar:teo:cf}
	Consider the periodic cell described in \S\ref{coplanar:def:problema}.
	If the initial concentration $c_{\sigma,i}(x,z)$ of species $\sigma \in \{O, R\}$
	comes from a previous steady state and the following integral converges
	\begin{equation}
		\label{coplanar:eqn:i_neta_acumulada}
		\int_{0^{-}}^{+\infty} \int_{-p_{x}/2}^{+p_{x}/2} j(x,t) \ud{x} \ud{t}
	\end{equation}
	then the final current density $j_{f}(x) = \lim_{t \to +\infty} j(x,t)$
	satisfies Kichhoff's current law in one period of the cell
	\begin{equation}
		\label{coplanar:eqn:kirchhoff:final}
		\int_{-p_{x}/2}^{+p_{x}/2} j_{f}(x) \ud{x} = 0
	\end{equation}
	and the Fourier coefficients $\fourier c_{\sigma,f}(n_{x},z)$ of the final concentration 
	$c_{\sigma,f}(x,z) = \lim_{t \to +\infty} c_{\sigma}(x,z,t)$ are given by
	\begin{subequations}
		\begin{align}
			\fourier c_{\sigma,f}(n_{x},z) &=
			\left\{
				\begin{array}{ll}
					\displaystyle
					\bar{c}_{\sigma,f} =
					\bar{c}_{\sigma,i} \pm \Delta \bar{c}_{f}, & n_{x} = 0
					\\[1em]
					\displaystyle
					\pm G\!\left(H-z,\, n_{x}^{2} \frac{4\pi^{2}}{p_{x}^{2}}\right)
					\frac{\fourier j_{f}(n_{x})}{F n_{e} D_{\sigma}}, & n_{x} \neq 0
				\end{array}
			\right.
			\\[0.5em]
			\label{coplanar:eqn:delta_bar_cf}
			\Delta \bar{c}_{f} &= \frac{1}{H}
			\int_{0^{-}}^{+\infty} \frac{1}{p_{x}} \int_{-p_{x}/2}^{+p_{x}/2}
			\frac{j(x,t)}{F n_{e}} \ud{x} \ud{t}
		\end{align}
	\end{subequations}
	where $p_{x}$ is one period of the cell,
	$\bar{c}_{\sigma,i}$ is the horizontal average of the initial concentration
	defined in Eq. (\ref{coplanar:eqn:bar_ci}),
	and $G(z,s)$ is defined in Eq. (\ref{coplanar:eqn:G}).
\end{teorema}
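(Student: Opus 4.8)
The plan is to realise the final steady state as the long-time limit of the transient solution in Theorem~\ref{coplanar:teo:Dc}, which in the Laplace domain is precisely the Final Value Theorem $\lim_{t\to+\infty} f(t) = \lim_{s\to 0^{+}} s\,\laplace f(s)$ applied coefficient by coefficient. Writing $\fourier c_{\sigma,f}(n_x,z) = \fourier c_{\sigma,i}(n_x,z) + \fourier\Delta c_{\sigma,f}(n_x,z)$ with $\fourier\Delta c_{\sigma,f}(n_x,z) = \lim_{s\to 0^{+}} s\,\laplace\fourier\Delta c_\sigma(n_x,z,s)$, I would split into the regimes $n_x\neq 0$ and $n_x=0$, because the factor $G(H-z,\,s/D_\sigma + n_x^{2}\,4\pi^{2}/p_x^{2})$ from Eq.~\eqref{coplanar:eqn:Dc} behaves very differently at $s=0$ in the two cases.

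For $n_x\neq 0$ the second argument of $G$ stays bounded away from the origin (the term $n_x^{2}\,4\pi^{2}/p_x^{2}$ is strictly positive), so $G$ is continuous there and the limit passes through unobstructed. Applying the Final Value Theorem to $\laplace\fourier\Delta j(n_x,s)$ produces $\fourier j_f(n_x)-\fourier j_i(n_x)$, and adding the initial coefficient supplied by Theorem~\ref{coplanar:teo:ci} cancels the $\fourier j_i(n_x)$ contribution, leaving exactly the stated expression in terms of $\fourier j_f(n_x)$.

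The delicate case is $n_x=0$, where $G(H-z,\,s/D_\sigma)$ has a simple pole at $s=0$. Here I would expand for small $s$ via $\cosh(\sqrt{s/D_\sigma}\,(H-z))\to 1$ and $\sqrt{s/D_\sigma}\,\sinh(\sqrt{s/D_\sigma}\,H)\sim (s/D_\sigma)\,H$, giving $G(H-z,\,s/D_\sigma)\sim D_\sigma/(sH)$ \emph{independently of $z$}; this $z$-independence is what forces the final $n_x=0$ mode to be flat in $z$, as a genuine steady state must be. The factor $s$ from the theorem cancels the pole and leaves $\pm (1/H)\,\lim_{s\to 0^{+}}\laplace\fourier\Delta j(0,s)/(F n_e)$. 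What survives is the Laplace transform of $\fourier\Delta j(0,t)$ evaluated at $s=0$, that is the time integral $\int_{0^{-}}^{+\infty}\fourier\Delta j(0,t)\ud{t}$, and the convergence hypothesis~\eqref{coplanar:eqn:i_neta_acumulada} is exactly the statement that this accumulated net current is finite. Replacing $\fourier\Delta j(0,t)$ by the net current of $j$ through Kirchhoff's law~\eqref{coplanar:eqn:kirchhoff:inicial} for $j_i$ yields $\pm\Delta\bar c_f$ with $\Delta\bar c_f$ as in Eq.~\eqref{coplanar:eqn:delta_bar_cf}; adding $\bar c_{\sigma,i}$ gives $\bar c_{\sigma,f}$.

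It remains to establish Kirchhoff's law~\eqref{coplanar:eqn:kirchhoff:final} for $j_f$, which follows from the same hypothesis by a soft argument independent of the above: the inner integral $\int_{-p_x/2}^{+p_x/2} j(x,t)\ud{x}$ tends to $\int_{-p_x/2}^{+p_x/2} j_f(x)\ud{x}$ as $t\to+\infty$, while its integral over $t\in[0,\infty)$ converges by~\eqref{coplanar:eqn:i_neta_acumulada}; since the integrand of a convergent improper integral cannot tend to a nonzero limit, the net final current must vanish. I expect the main obstacle to lie in the $n_x=0$ asymptotics — justifying that the pole of $G$ at $s=0$ is simple and $z$-independent, and that the Final Value Theorem legitimately exchanges the $t\to+\infty$ limit with the Fourier and Laplace operations, so that the accumulated-current integral emerges cleanly rather than as a divergent or $z$-dependent artefact.
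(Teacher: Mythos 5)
Your proposal follows essentially the same route as the paper: apply the final value theorem to the Laplace-domain solution of Theorem~\ref{coplanar:teo:Dc}, treat the simple $z$-independent pole of $G$ at $s=0$ separately for the $n_{x}=0$ mode so that the accumulated net current emerges as $\Delta\bar{c}_{f}$, and add the initial coefficients from Theorem~\ref{coplanar:teo:ci} to recover the stated expressions. Your explicit soft argument that a convergent improper time integral forces the net final current to vanish is a small addition the paper leaves implicit, but it does not change the structure of the proof.
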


Note from the theorem above that Eq. \eqref{coplanar:eqn:kirchhoff:final}
(that is, Kirchoff's current law be satisfied in steady state, or equivalently,
100\% collection efficiency in the final steady state)
is a \textbf{necessary condition}
for the convergence of the concentration profile in the final steady state.

\begin{proof}
	The final steady state can be obtained
	if one applies the \emph{final value theorem} of the Laplace transform
	to $\laplace \fourier \Delta c_{\sigma}(n_{x},z,s)$ in Eq. (\ref{coplanar:eqn:Dc})
	\begin{equation}
		\fourier \Delta c_{\sigma,f}(n_{x},z) =
		\lim_{t \to +\infty} \fourier \Delta c_{\sigma}(n_{x},z,t) =
		\lim_{s \to 0} s\, \laplace \fourier \Delta c_{\sigma}(n_{x},z,s)
	\end{equation}
	Separating the limits, according to the following equation,
	aids in the calculation of the Fourier coefficients in steady state
	\begin{multline}
		\fourier \Delta c_{\sigma,f}(n_{x},z) = \\
		\begin{cases}
			\displaystyle
			\pm \lim_{s \to 0} s\, G\!\left(H-z,\, \frac{s}{D_{\sigma}}\right)
			\cdot \lim_{s \to 0} s\, \frac{1}{s}
			\frac{\laplace \fourier \Delta j(0,s)}{F n_{e} D_{\sigma}},
			& n_{x} = 0
			\\[1em]
			\displaystyle
			\pm \lim_{s \to 0} G\!\left(H-z,\, \frac{s}{D_{\sigma}}
			+ n_{x}^{2} \frac{4\pi^{2}}{p_{x}^{2}}\right)
			\cdot \lim_{s \to 0} s\,
			\frac{\laplace \fourier \Delta j(n_{x},s)}{F n_{e} D_{\sigma}},
			& n_{x} \neq 0
		\end{cases}
	\end{multline}
	where the final value of $\fourier \Delta j(n_{x},t)$ is given by
	\begin{equation}
		\fourier \Delta j_{f}(n_{x}) =
		\lim_{t \to +\infty} \fourier \Delta j(n_{x},t) =
		\lim_{s \to 0} s\, \laplace \fourier \Delta j(n_{x},s)
	\end{equation}
	the final value of its time integral is given by
	\begin{equation}
		\int_{0^{-}}^{+\infty} \fourier \Delta j(n_{x},t) \ud{t} =
		\lim_{s \to 0} s\, \frac{1}{s} \laplace \fourier \Delta j(n_{x},s)
	\end{equation}
	and the following limit equals
	\begin{equation}
		\lim_{s \to 0} s\, G\!\left(H-z,\, \frac{s}{D_{\sigma}}\right) =
		\frac{D_{\sigma}}{H}
	\end{equation}
	These lead to the result in steady state
	\begin{equation}
		\fourier \Delta c_{\sigma,f}(n_{x},z) =
		\left\{
			\begin{array}{ll}
				\displaystyle
				\pm \frac{D_{\sigma}}{H} \cdot
				\int_{0^{-}}^{+\infty}
				\frac{\fourier \Delta j(0,t)}{F n_{e} D_{\sigma}}
				\ud{t}, & n_{x} = 0
				\\[1em]
				\displaystyle
				\pm G\!\left(H-z,\, n_{x}^{2} \frac{4\pi^{2}}{p_{x}^{2}}\right)
				\frac{\fourier \Delta j_{f}(n_{x})}{F n_{e} D_{\sigma}}, & n_{x} \neq 0
			\end{array}
		\right.
	\end{equation}

	Therefore, the Fourier coefficients of the full-scale concentrations are obtained by adding Eq. (\ref{coplanar:eqn:ci})
	\begin{equation}
		\fourier c_{\sigma,f}(n_{x},z) =
		\left\{
			\begin{array}{ll}
				\displaystyle
				\bar{c}_{\sigma,i} \pm \frac{1}{H}
				\int_{0^{-}}^{+\infty}
				\frac{\fourier j(0,t)}{F n_{e}}
				\ud{t}, & n_{x} = 0
				\\[1em]
				\displaystyle
				\pm G\!\left(H-z,\, n_{x}^{2} \frac{4\pi^{2}}{p_{x}^{2}}\right)
				\frac{\fourier j_{f}(n_{x})}{F n_{e} D_{\sigma}}, & n_{x} \neq 0
			\end{array}
		\right.
	\end{equation}
	where $\fourier j(0,t) = \fourier j(0,t) - \fourier j_{i}(0) = \fourier \Delta j(0,t)$
	due to Eq. (\ref{coplanar:eqn:kirchhoff:inicial}).
\end{proof}

Considering the previous result,
the first \emph{conservation} property holds for the horizontal average of $c_{\sigma.f}(x,z)$, at any $z$ of the cell,
which may deviate from its initial counterpart
due to unbalanced currents (Kirchhoff's law not satisfied) during the transient state.

\begin{corolario}
	\label{coplanar:cor:bar_cf}
	Assume that the initial concentration $c_{\sigma,i}(x,z)$ comes from a previous steady state
	and the time integral of the net current in Eq. (\ref{coplanar:eqn:i_neta_acumulada}) converges.
	The average of the final concentration, along any horizontal line,
	is independent of $z$ and equals 
	\begin{equation}
		\label{coplanar:eqn:bar_cf}
		\frac{1}{p_{x}} \int_{-p_{x}/2}^{+p_{x}/2} c_{\sigma,f}(x,z) \ud{x} =
		\fourier c_{\sigma,f}(0,z) =
		\bar{c}_{\sigma,f} =
		\bar{c}_{\sigma,i} \pm \Delta \bar{c}_{f}
	\end{equation}
	where $p_{x}$ is one period of the cell
	and $\Delta \bar{c}_{f}$ is independent of the electrochemical species,
	but is proportional to the time integral of the net current,
	as shown in Eq. (\ref{coplanar:eqn:delta_bar_cf}).
\end{corolario}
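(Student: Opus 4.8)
The plan is to read the statement off directly from Theorem~\ref{coplanar:teo:cf}, in exactly the way Corollary~\ref{coplanar:cor:bar_ci} follows from Theorem~\ref{coplanar:teo:ci}. The key observation is that, by the definition of the Fourier coefficients used throughout (see Supplementary Information \S\ref{transforms}), the zeroth coefficient of any $p_{x}$-periodic function is precisely its average over one period. Applied to the final concentration this reads
\begin{equation}
	\frac{1}{p_{x}} \int_{-p_{x}/2}^{+p_{x}/2} c_{\sigma,f}(x,z) \ud{x}
	= \fourier c_{\sigma,f}(0,z),
\end{equation}
so the left-hand side of Eq.~\eqref{coplanar:eqn:bar_cf} is nothing but the $n_{x}=0$ Fourier coefficient of the limiting profile, evaluated at height $z$.

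The next step is to substitute the value of this coefficient supplied by Theorem~\ref{coplanar:teo:cf}. For $n_{x}=0$ that theorem gives $\fourier c_{\sigma,f}(0,z) = \bar{c}_{\sigma,f} = \bar{c}_{\sigma,i} \pm \Delta\bar{c}_{f}$, which is a constant carrying no dependence on $z$. This simultaneously establishes the claimed equality in Eq.~\eqref{coplanar:eqn:bar_cf} and the asserted independence of the horizontal average from the height at which it is computed.

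Finally, to justify the remark that $\Delta\bar{c}_{f}$ is common to both species, I would inspect its explicit expression in Eq.~\eqref{coplanar:eqn:delta_bar_cf}: the right-hand side involves only the cell height $H$, the period $p_{x}$, the constants $F$ and $n_{e}$, and the space-and-time integral of the current density $j(x,t)$, none of which depend on $\sigma$ or on $D_{\sigma}$. Hence $\Delta\bar{c}_{f}$ takes the same value for $\sigma=O$ and $\sigma=R$, and the entire distinction between the two species is carried by the $\pm$ sign in Eq.~\eqref{coplanar:eqn:bar_cf}.

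I do not expect a genuine obstacle, since the substantive work has already been done in Theorem~\ref{coplanar:teo:cf}; the only point requiring a word of care is the identification of the horizontal average with the zeroth Fourier coefficient of the \emph{limiting} profile. This is unproblematic because Theorem~\ref{coplanar:teo:cf} delivers the Fourier coefficients of $c_{\sigma,f}$ itself (obtained via the final value theorem under the convergence hypothesis on Eq.~\eqref{coplanar:eqn:i_neta_acumulada}), so no separate argument interchanging the limit $t\to+\infty$ with the horizontal integral is needed.
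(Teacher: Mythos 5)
Your proposal is correct and follows exactly the route the paper intends: the corollary is stated without a separate proof precisely because it is the $n_{x}=0$ case of Theorem~\ref{coplanar:teo:cf} combined with the observation that the zeroth Fourier coefficient of a $p_{x}$-periodic function is its average over one period, which is what you do. Your closing remark that no interchange of the $t\to+\infty$ limit with the horizontal integral is needed (since the theorem already delivers the Fourier coefficients of the limiting profile via the final value theorem) is a fair and correct reading of the paper.
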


The second \emph{conservation} property holds for the weighted sum of concentrations
at any point in the cell,
which translates into the total concentration at any point in the cell
when the diffusion coefficients of both species are equal.

\begin{corolario}
	\label{coplanar:cor:cf:total}
	Consider the periodic cell with period $p_{x}$ described in \S\ref{coplanar:def:problema},
	and assume that the initial concentration $c_{\sigma,i}(x,z)$ comes from a previous steady state
	and the time integral of the net current in Eq. (\ref{coplanar:eqn:i_neta_acumulada}) converges. 
	The following weighted sum of the final concentrations is independent of $(x,z)$ and equals
	\begin{equation}
		D_{O} c_{O,f}(x,z) + D_{R} c_{R,f}(x,z) =
		D_{O} \underbrace{(\bar{c}_{O,i} + \Delta \bar{c}_{f})}_{\bar{c}_{O,f}} +
		D_{R} \underbrace{(\bar{c}_{R,i} - \Delta \bar{c}_{f})}_{\bar{c}_{R,f}}
	\end{equation}
	where $\Delta \bar{c}_{f}$ is independent of the electrochemical species,
	but is proportional to the time integral of the net current,
	as shown in Eq. (\ref{coplanar:eqn:delta_bar_cf}).
\end{corolario}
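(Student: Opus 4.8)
The plan is to mirror the argument used for the initial steady state in Corollary~\ref{coplanar:cor:ci:total}, operating entirely at the level of Fourier coefficients and inverting only at the very end. First I would invoke Theorem~\ref{coplanar:teo:cf}, which already supplies the Fourier coefficients $\fourier c_{\sigma,f}(n_{x},z)$ of the final concentration for each species, and assemble the $D_{\sigma}$-weighted combination $D_{O}\,\fourier c_{O,f}(n_{x},z) + D_{R}\,\fourier c_{R,f}(n_{x},z)$, treating the cases $n_{x} = 0$ and $n_{x} \neq 0$ separately.

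For $n_{x} = 0$, substituting $\bar{c}_{\sigma,f} = \bar{c}_{\sigma,i} \pm \Delta\bar{c}_{f}$ immediately yields $D_{O}(\bar{c}_{O,i} + \Delta\bar{c}_{f}) + D_{R}(\bar{c}_{R,i} - \Delta\bar{c}_{f})$, which is precisely the constant claimed on the right-hand side. For $n_{x} \neq 0$, the key observation is that the factor $D_{\sigma}$ in the denominator of $\fourier c_{\sigma,f}(n_{x},z)$ is cancelled by the weight $D_{\sigma}$, while the opposite $\pm$ signs make the two species' contributions subtract; since in the \emph{final} steady state the transfer function $G\!\left(H-z,\,n_{x}^{2}\,4\pi^{2}/p_{x}^{2}\right)$ carries no species dependence (the $s/D_{\sigma}$ term having vanished as $s \to 0$), both contributions reduce to the identical quantity $G(\cdots)\,\fourier j_{f}(n_{x})/(F n_{e})$ and hence cancel to zero.

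The weighted sum of Fourier coefficients is therefore supported only on the mode $n_{x} = 0$, so taking the Fourier series in $x$ returns the single constant term, which establishes independence of both $x$ and $z$. I would emphasize that, unlike the transient Corollary~\ref{coplanar:cor:ct:total} which required $D_{O} = D_{R}$, no restriction on the diffusion coefficients is needed here: the cancellation of the $n_{x} \neq 0$ modes holds for arbitrary $D_{O}, D_{R}$ precisely because the steady-state $G$ is species-independent once $s \to 0$. The only genuine point to verify is this $n_{x} \neq 0$ cancellation, which I expect to be the main (and only mild) obstacle; everything else is a direct transcription of the initial-state argument.
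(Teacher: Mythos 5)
Your proposal is correct and follows essentially the same route as the paper's own (much terser) proof: form the $D_{\sigma}$-weighted sum of the Fourier coefficients from Theorem~\ref{coplanar:teo:cf}, observe that it equals the claimed constant for $n_{x}=0$ and vanishes for $n_{x}\neq 0$ (because the weight cancels the $D_{\sigma}$ in the denominator and the $\pm$ signs are opposite, the steady-state $G$ being species-independent), and then take the Fourier series. Your added remark that no restriction $D_{O}=D_{R}$ is needed, in contrast to the transient Corollary~\ref{coplanar:cor:ct:total}, is accurate and consistent with the paper.
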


\begin{proof}
	Take the weighted sum of both Fourier coefficients
	\begin{multline}
		D_{O} \fourier c_{O,f}(n_{x},z) + D_{R} \fourier c_{R,f}(n_{x},z)
		\\
		= \begin{cases}
			D_{O} (\bar{c}_{O,i} + \Delta \bar{c}_{f}) +
			D_{R} (\bar{c}_{R,i} - \Delta \bar{c}_{f}), & n_{x} = 0
			\\
			0, & n_{x} \neq 0
		\end{cases}
	\end{multline}
	and later, take its Fourier series.
\end{proof}


\section{Results and discussion}

The results of the theoretical part will be illustrated with a concrete case,
namely, the case of \emph{interdigitated array of electrodes} (IDAE).
For this configuration, it will be seen that the properties of
\emph{horizontal average} and \emph{weighted sum} of concentrations,
together with the physical constraint of \emph{non-negative} concentrations,
impose non-linearities that can affect the limiting current of the cell.
Besides, a rough prediction of the dynamic behavior of the current
and also a prediction of the change in the average concentration on the IDAE is done.
The last two results are contrasted against simulations.

\subsection{Average properties in case of interdigitated arrays}
\label{idae:promedios}

Consider the case of an IDAE configuration
in a cell of height $H$, total width $W_{T}$ and depth $L$,
as shown in Fig. \ref{idae:fig:cell}.
The cell is symmetric along the $y$-axis,
such that a two-dimensional representation $(x,z)$ suffices.
Inside this cell, there are two electrochemical species that react according to Eq. \eqref{coplanar:eqn:reaccion}.

The IDAE consists of two arrays of band electrodes, $A$ (black) and $B$ (gray),
of which two consecutive bands are separated by a center-to-center distance of $W$,
the width of the bands is $2w_{A}$ and $2w_{B}$,
and the number of bands is $N_{A} = N_{B}$ respectively.
The cell may have one of the arrays performing as counter electrode, Fig. \ref{idae:fig:cell:intC},
or have a counter electrode  of width $w_{C}$ external and coplanar to the IDAE, Fig. \ref{idae:fig:cell:extC}.

For the sake of simplicity,
it is assumed that the first and last bands of the IDAE have half width.
Therefore, the IDAE in Fig. \ref{idae:fig:cell:intC} can be represented exactly
as an assembly of units of symmetry of width $W$, height $H$ and half-band electrodes of $A$ and $B$.
Here, each unit of symmetry will be refered to as a \emph{unit cell},
and it is shown in Fig. \ref{idae:fig:cell:unit}.
Similarly, the IDAE in Fig. \ref{idae:fig:cell:extC} can be represented approximately as an assembly of unit cells,
provided that the number of electrode bands $N_{A} = N_{B}$ is sufficiently large,
so that the edge effects at the end of the IDAE are negligible.

\begin{figure}[t]
	\centering
	\subcaptionbox{
		\label{idae:fig:cell:intC}
		Internal counter electrode.}[60mm]{\includegraphics{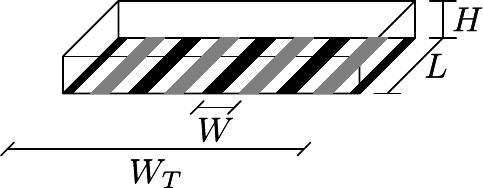}}
	\subcaptionbox{
		\label{idae:fig:cell:extC}
		External counter electrode.}[60mm]{\includegraphics{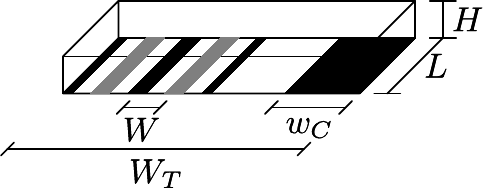}}
	\caption{
		Sketch of interdigitated array of electrodes (IDAE) in a cell of finite height $H$,
		total width $W_{T}$ and depth $L$.
		Fig. (\subref{idae:fig:cell:intC}) can be regarded as an assembly of unit cells of width $W$.
		Fig. (\subref{idae:fig:cell:extC}) can be approximately regarded as an assembly of unit cells of width $W$,
		provided that the number of electrode bands is sufficiently large.
		Note that for both IDAEs, the first and the last bands have half width.
	}
	\label{idae:fig:cell}
\end{figure}

Due to the periodic nature of the IDAE configuration,
the average properties in Corollaries \ref{coplanar:cor:bar_ci}, \ref{coplanar:cor:delta_bar_c} and \ref{coplanar:cor:bar_cf}
must be satisfied at each unit cell of the IDAE ($p_{x} = 2W$)
\begin{subequations}
	\label{idae:eqn:bar_c:unit}
	\begin{align}
		\label{idae:eqn:bar_ci:unit}
		\frac{1}{W} \int_{0}^{W} c_{\sigma,i}(x,z) \ud{x}
		&= \bar{c}_{\sigma,i}\unit
		\\
		\label{idae:eqn:bar_ct:unit}
		\frac{1}{W} \int_{0}^{W} c_{\sigma}(x,z,t) \ud{x}
		&= \bar{c}_{\sigma,i}\unit \pm \Delta\bar{c}_{\sigma}\unit(z,t)
		\\
		\label{idae:eqn:bar_cf:unit}
		\frac{1}{W} \int_{0}^{W} c_{\sigma,f}(x,z) \ud{x}
		&= \underbrace{\bar{c}_{\sigma,i}\unit \pm \Delta\bar{c}_{f}\unit}_{\bar{c}_{\sigma,f}\unit}
	\end{align}
\end{subequations}
either when it fits exactly in the whole cell ($W_{T} = 2W N_{E}$, Fig. \ref{idae:fig:cell:intC}),
or when it doesn't ($W_{T} > 2W N_{E}$, Fig. \ref{idae:fig:cell:extC})
but considering a large number of bands $N_{E}$, with $E \in \{A,B\}$.
Note that the horizontal average in the final steady state holds after a sufficiently long time,
comparable with the time constant of the unit cell.

Besides, the same average properties
in Corollaries \ref{coplanar:cor:bar_ci}, \ref{coplanar:cor:delta_bar_c} and \ref{coplanar:cor:bar_cf}
should also hold for the whole cell ($p_{x} = 2W_{T}$)
\begin{subequations}
	\label{idae:eqn:bar_c:whole}
	\begin{align}
		\label{idae:eqn:bar_ci:whole}
		\frac{1}{W_{T}} \int_{0}^{W_{T}} c_{\sigma,i}(x,z) \ud{x}
		&= \bar{c}_{\sigma,i}\whole
		\\
		\label{idae:eqn:bar_ct:whole}
		\frac{1}{W_{T}} \int_{0}^{W_{T}} c_{\sigma}(x,z,t) \ud{x}
		&= \bar{c}_{\sigma,i}\whole 
		\pm \underbrace{\Delta\bar{c}_{\sigma}\whole(z,t)}_{=0}
		\\
		\label{idae:eqn:bar_cf:whole}
		\frac{1}{W_{T}} \int_{0}^{W_{T}} c_{\sigma,f}(x,z) \ud{x}
		&= \bar{c}_{\sigma,i}\whole 
		\pm \underbrace{\Delta\bar{c}_{f}\whole}_{=0}
	\end{align}
\end{subequations}
since it is surrounded by insulating walls,
and therefore it can be extended periodically along the $x$-axis.
Note that $\Delta\bar{c}_{\sigma}\whole(z,t) = \Delta\bar{c}_{f}\whole = 0$,
since the whole cell always contains its counter electrode,
therefore the its horizontal average remains constant for all $t$.
In particular, for the case of internal counter electrode (Fig. \ref{idae:fig:cell:intC}),
the horizontal average in the whole cell equals that in the unit cell
$\bar{c}_{\sigma,i}\whole = \bar{c}_{\sigma,i}\unit$
and remains unchanged for all $t$.

\subsection{Simulations}

Simulations\footnote{Scripts can be obtained from \url{\elektrodo}.}
were performed for the current in the unit cell using the finite volume PDE solver
\href{http://www.ctcms.nist.gov/fipy/}{FiPy} \cite{Guyer:2009:may}.
The numerical results are compared with their theoretical counterparts in the coming sections.

For the sake of simplicity, it is assumed that
the charge transfer on the electrodes follows reversible electrode reactions.
Also it is assumed that the species have equal diffusion coefficients $D_{O} = D_{R} = D$.
Both assumptions mean that the Nernst equation on both electrodes can be decoupled not only in steady state, but also during the transient, due to Corollary \ref{coplanar:cor:ct:total}.
This allows simulating the concentration of each electrochemical species independently.

The simulations consist of a normalized diffusion equation for the unit cell of Fig. \ref{idae:fig:cell:unit}
\begin{subequations}
	\label{idae:eqn:pde:idae:normalized}
	\begin{align}
		\pi^{2} \parderiv{\xi\simu}{t\simu}(x, z, t) 
		&= \parderiv{^2 \xi\simu}{x\simu^{2}}(x, z, t)
		+ \parderiv{^{2} \xi\simu}{z\simu^{2}}(x, z, t) \\
		\parderiv{\xi\simu}{x\simu}(0,z,t)
		&= \parderiv{\xi\simu}{x\simu}(W, z, t) = 0,
		\, \forall z \in [0,H]
	\end{align}
	\vspace{-1.5\baselineskip}
	\begin{gather}
		\parderiv{\xi\simu}{z\simu}(x, H, t) = 0,\, \forall x \in [0,W],
		\: \parderiv{\xi\simu}{z\simu}(x,0,t) = 0,
		\, \forall x \notin A \cup B
		\\
		\xi\simu(x,0,t) = 0,\, \forall x \in A,
		\quad \xi\simu(x,0,t) = 1,\, \forall x \in B
		\label{idae:eqn:pde:idae:normalized:onAB}
	\end{gather}
\end{subequations}
where
\begin{subequations}
	\label{idae:eqn:pde:normalization}
	\begin{gather}
		\xi\simu(x,z,t) = \frac{c_{\sigma}(x,z,t) - c_{\sigma,f}^{A}}{c_{\sigma,f}^{B} - c_{\sigma,f}^{A}}
		\\
		x\simu = \frac{x}{W}
		,\quad z\simu = \frac{z}{W}
		,\quad t\simu = \frac{\pi^{2} D t}{W^{2}}
	\end{gather}
\end{subequations}
which considers the transition  
from two possible initial states $\xi\simu(x,z,0^{-}) \in \{\numlist[list-pair-separator={,}]{0,25; 0,5}\}$ to its final state $\xi\simu(x,z,+\infty)$.

The width of each band electrode was taken equal to $2w_{A} = 2w_{B} = \num{0,5} W$ for all simulations
and three aspect ratios for the unit cell were considered
$H/W \in \{\numlist[list-final-separator={,}]{0,3; 0,5; 1,0}\}$.

\begin{figure}[t]
	\centering
	\subcaptionbox{
		Mesh: $n_{x} \times n_{z} = 124 \times 37$ and $\delta_{0} = \num{6,25e-5}$.
	}{\includegraphics{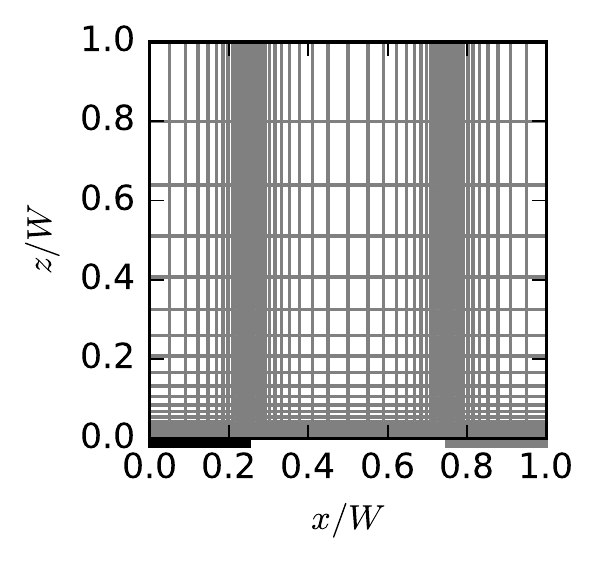}}
	\hspace{1em}
	\subcaptionbox{
		Mesh: $n_{x} \times n_{z} = 124 \times 34$ and $\delta_{0} = \num{6,25e-5}$.
	}{\includegraphics{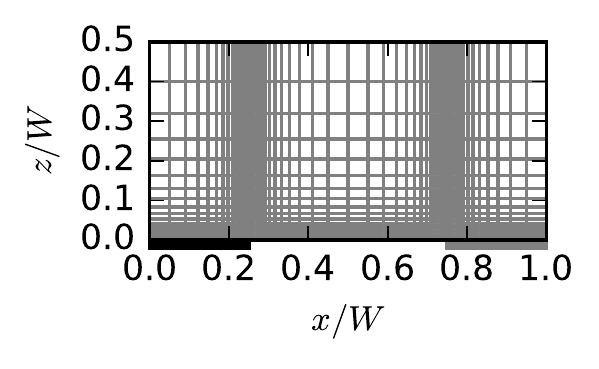}}
	\subcaptionbox{
		Mesh: $n_{x} \times n_{z} = 120 \times 31$ and $\delta_{0} = \num{7,5e-5}$.
	}{\includegraphics{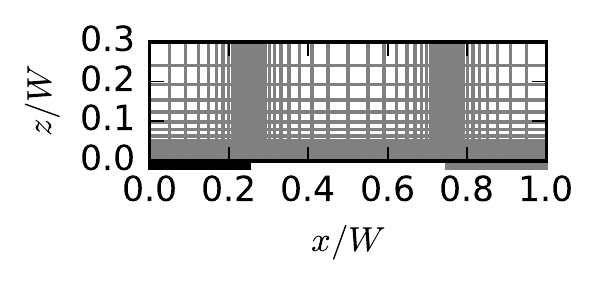}}
	\caption{
		Exponential meshes used in the simulations.
		The size of each mesh is $n_{x} \times n_{z}$
		and the dimensions of its smallest element are $\delta_{x} = \delta_{z} = \delta_{0}$.
	}
	\label{idae:fig:mesh}
\end{figure}

An exponential mesh was used to partition the unit cell \cite[\S7.2]{Britz:2016:},
in order to keep the memory usage low while maintaining good resolution near the electrode bands, see Fig. \ref{idae:fig:mesh}.
The number of elements of the mesh is $n_{x} \times n_{z}$,
of which the width and height of its smallest element are $\delta_{x} = \delta_{z} = \delta_{0}$.
The mesh was succesively refined until the absolute error of the current in steady state,
between two consecutive refinements, was less than \num{0,5e-4}
(which corresponds approximately to four decimal places of agreement between refinemts).
See Suplementary Information \S\ref{data:s32:mesh} for the output of the script of mesh refinement.

Fig. \ref{idae:fig:i_sim} shows the simulated current $i\simu^{E/2}(t)$
through a half-band electrode of $E \in \{A,B\}$
\begin{equation}
	i\simu^{E/2}(t)
	= \int_{E/2} -\parderiv{\xi\simu}{z\simu}(x,0,t) \ud{x\simu}
	= \int_{E/2} -\parderiv{\xi\simu}{z}(x,0,t) \ud{x}
\end{equation}
which was obtained by numerically solving Eqs. (\ref{idae:eqn:pde:idae:normalized})
subject to the initial condition $\xi\simu(x,z,0^{-}) \in \{\numlist[list-pair-separator={,}]{0,25; 0,5}\}$.

\begin{figure*}
	\centering
	\subcaptionbox{
		$H/W = \num{1}$.
		$|i\simu^{E/2}(+\infty)| \approx \num{0,496}$.
		$\Delta\bar{c}_{f}/[c_{\sigma,f}^{B} - c_{\sigma,f}^{A}]
		\approx \{\num{\pm 0,249},\, 0\}$.
	}{
		\includegraphics{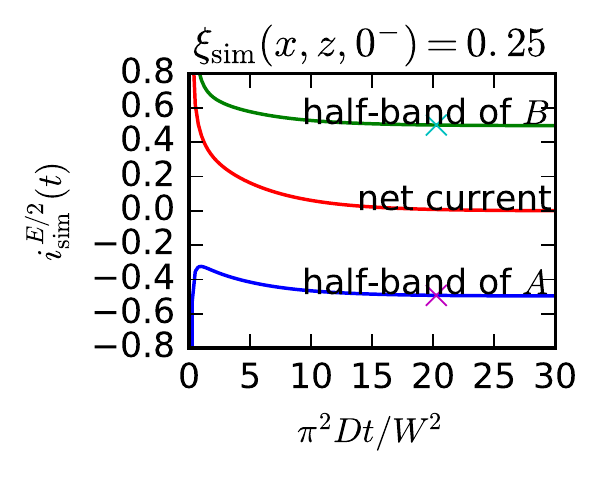} \quad
		\includegraphics{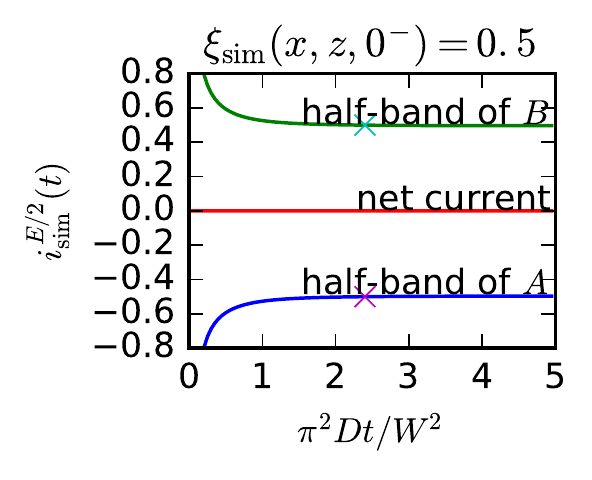}
	}

	\subcaptionbox{
		$H/W = \num{0,5}$.
		$|i\simu^{E/2}(+\infty)| \approx \num{0,460}$.
		$\Delta\bar{c}_{f}/[c_{\sigma,f}^{B} - c_{\sigma,f}^{A}]
		\approx \{\num{\pm 0,250},\, 0\}$.
	}{
		\includegraphics{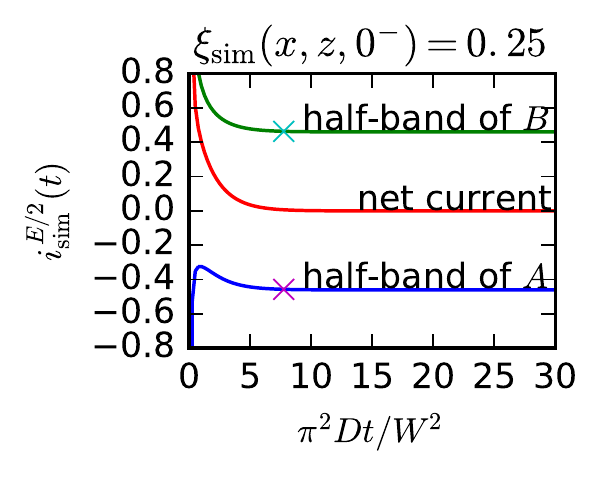} \quad
		\includegraphics{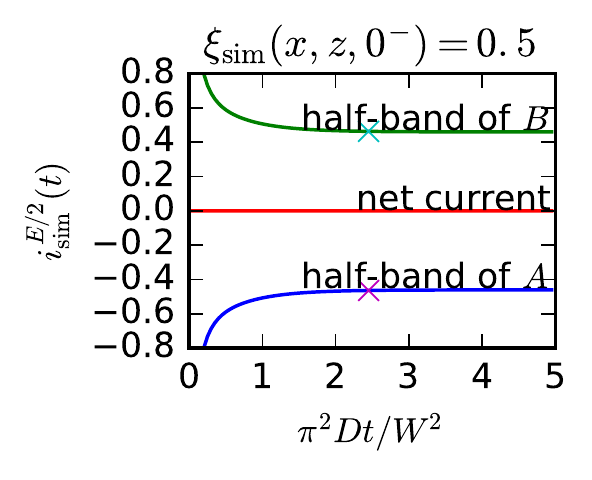}
	}

	\subcaptionbox{
		$H/W = \num{0,3}$.
		$|i\simu^{E/2}(+\infty)| \approx \num{0,377}$.
		$\Delta\bar{c}_{f}/[c_{\sigma,f}^{B} - c_{\sigma,f}^{A}]
		\approx \{\num{\pm 0,250},\, 0\}$.
	}{
		\includegraphics{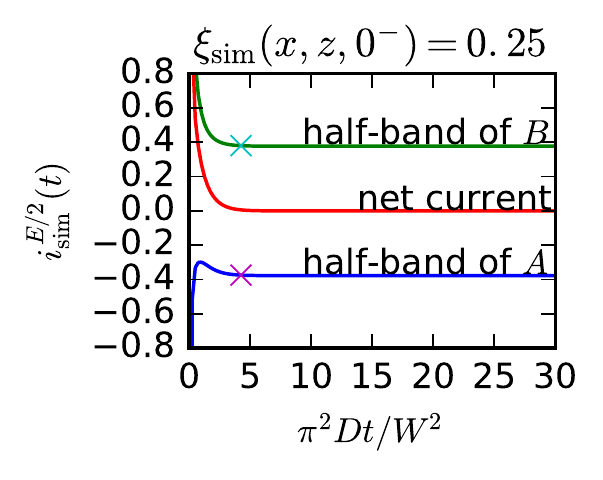} \quad
		\includegraphics{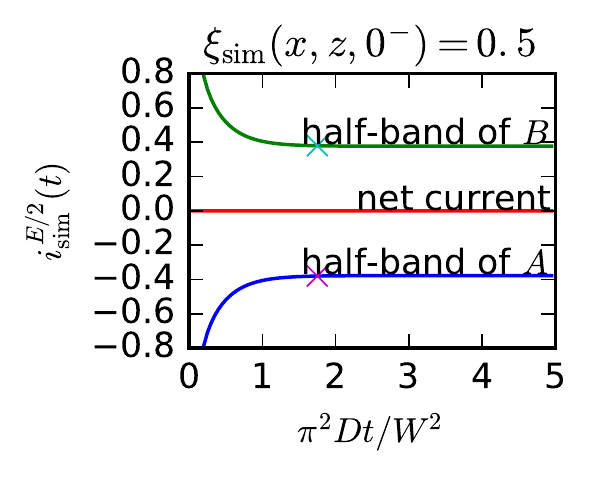}
	}
	\caption{
		Simulated current $i\simu^{E/2}(t) = \int_{E/2} -\partial \xi\simu(x,0,t)/ \partial z\simu \ud{x\simu}$ through a half-band electrode of $E \in \{A,B\}$,
		as a function of time, for different aspect ratios of the unit cell $H/W$ and initial concentrations $\xi\simu(x,z,0^{-})$.
		The `$\times$' show the time required by the simulated current to reach \SI{0,67}{\percent} of its steady-state value.
		Left column:
		Average concentrations in initial and final steady states are different $\xi\simu(x,z,0^{-}) \neq 1/2$.
		Right column:
		Average concentrations in initial and final steady states are equal $\xi\simu(x,z,0^{-}) = 1/2$.
		The simulations were obtained by numerically solving Eqs. (\ref{idae:eqn:pde:idae:normalized}).
	}
	\label{idae:fig:i_sim}
\end{figure*}

\subsection{Effect of the counter electrode on the net current}

The fact of having an IDAE with internal or external counter electrode
influences the time that its current requires to reach steady state
and also its collection efficiency.
Both effects can be obtained as consecuence of the average properties of
Corollaries \ref{coplanar:cor:delta_bar_c} and \ref{coplanar:cor:bar_cf},
and will be discussed below.

\subsubsection{Time to reach steady state}

When using an external counter electrode (both arrays are potentiostated),
the average concentration at the bottom of the unit cell ($z = 0$)
is in general forced to a value different than its initial counterpart $\bar{c}_{\sigma,i}\unit$.
In case of the simulation in Eqs. \eqref{idae:eqn:pde:idae:normalized} and \eqref{idae:eqn:pde:normalization},
this equals the arithmetic average of the concentrations on both electrodes,
due to symmetry, since the electrode bands have equal width $2w_{A} = 2w_{B}$
\begin{equation}
	\label{idae:eqn:bar_x_sim}
	\frac{1}{W} \int_{0}^{W} \xi\simu(x,0,t) \ud{x} = \frac{1}{2}
	\Leftrightarrow
	\frac{1}{W} \int_{0}^{W} c_{\sigma}(x,0,t) \ud{x}
	= \frac{c_{\sigma,f}^{A} + c_{\sigma,f}^{B}}{2}
\end{equation}
which is different from its initial counterpart when $\xi\simu(x,z,0^{-}) = \num{0,25}$.
This produces a change of $\pm \Delta\bar{c}_{\sigma}\unit(0,t) \neq 0$
in the average concentration at the bottom of the unit cell,
see Eq. \eqref{idae:eqn:bar_ct:unit},
which subsequently generates a non-zero net current in the unit cell during transient state,
due to Eq. \eqref{coplanar:eqn:i_net} with $p_{x} = 2W$.

The simulations at the left column of Fig. \ref{idae:fig:i_sim} show
the generation of a net current in the unit cell when $\xi\simu(x,z,0^{-}) = \num{0,25}$.
Note that the net current in the unit cell,
as well as the current at each half-band electrode,
have similar dynamics and reach steady state nearly at the same time.
This time can be predicted from Eq. (\ref{coplanar:eqn:i_net}) in Corollary \ref{coplanar:cor:delta_bar_c},
since the net current in the unit cell has natural modes of the form
\begin{equation}
	\label{idae:eqn:tau:extC}
	\exp\!\left( -(2\ell - 1)^{2} \frac{W^{2}}{4H^{2}} \cdot \frac{\pi^{2}}{W^{2}} D_{\sigma} t \right),\quad \ell = 1,2,\ldots
\end{equation}
from its impulse response $F n_{e} D_{\sigma}^{2}\, h(H,D_{\sigma}t)$, which decay exponentially with time.
The slowest of these exponential modes, that is with $\ell=1$,
is the one that gives an idea of the time required to reach steady state.
This time is roughly approached when $\pi^{2} D_{\sigma} t/W^{2} = 5 \cdot (2H/W)^{2}$,
that is when the slowest exponential mode approximately vanishes $\exp(-5) \approx \SI{0,67}{\percent}$.
The left column of Fig. \ref{idae:fig:i_sim} shows with `$\times$'
the times needed for the simulated current to reach \SI{0,67}{\percent} of its steady-state value,
which correspond roughly to their theoretical counterparts: 20, 5 and \num{1,8}.

On the other hand, when one of the arrays performs as counter electrode (internal counter),
the net current in the unit cell must remain always zero.
This fact suggests that the average concentration at the bottom of the unit cell
is forced by the potentiostat to its initial counterpart $\bar{c}_{\sigma,i}\unit$.
In case of the simulation,
this average is forced to $\xi\simu(x,z,0^{-}) = \num{0,5}$.
This produces no change in average at the bottom of the unit cell
$\Delta \bar{c}_{\sigma}\unit(0,t) = 0$,
which is the cause of having zero net current during the transient.

The simulations at the right column of Fig. \ref{idae:fig:i_sim} show zero net current when $\xi\simu(x,z,0^{-}) = \num{0,5}$.
Despite the net current in the unit cell is zero,
the current at each array does evolve with time,
reaching its steady state in a shorter time than in the case of external counter
(compare with left column of Fig. \ref{idae:fig:i_sim}).
This behavior can be explained by looking at the Fourier series of the current density
\begin{equation}
	j(x,t) = \sum_{n_{x}=-\infty}^{+\infty} \fourier j(n_{x},t) \e^{\bm{i} x\, n_{x} \pi/W}
\end{equation}
where $\fourier j(n_{x},t)$ correspond to its Fourier coefficients.
Note that the Fourier coefficient $\fourier j(n_{x},t)$ with $n_{x}=0$
corresponds to the average component of the current density (net current) in the unit cell,
which equals zero when the counter electrode is internal to the IDAE.
Therefore, only the Fourier coefficients $\fourier j(n_{x},t)$ with $n_{x} \neq 0$ vary with time,
and they do so according to the impulse response
\begin{equation}
	\label{idae:eqn:h}
	\laplace^{-1} G\!\left( H, \frac{s}{D_{\sigma}} + n_{x}^{2} \frac{\pi^{2}}{W^{2}} \right)^{-1}
	=
	D_{\sigma}\,
	h(H,D_{\sigma} t) \exp\!\left( -n_{x}^{2} \frac{\pi^{2}}{W^{2}} D_{\sigma} t \right)
\end{equation}
from Eq. (\ref{coplanar:eqn:Dc}) in Theorem \ref{coplanar:teo:Dc}
and Eq. \eqref{coplanar:eqn:h} in Lemma \ref{coplanar:lem:g-h}.
Thus, the current density exhibits exponential modes that decay with time according to
\begin{equation}
	\label{idae:eqn:tau:intC}
	\exp\!\left( -\left[ n_{x}^{2} + (2\ell - 1)^{2} \frac{W^{2}}{4H^{2}} \right] \frac{\pi^{2}}{W^{2}} D_{\sigma} t \right),\quad 
	\begin{array}{rcl}
		n_{x} &=& \pm 1, \pm 2, \ldots \\
		\ell &=& 1, 2, \ldots
	\end{array}
\end{equation}
From all these exponential modes,
it is the slowest, that is with $n_{x} = \pm 1$ and $\ell=1$,
the one that gives an idea of the time required to reach steady state.
This time is roughly approached when $\pi^{2} D_{\sigma} t/W^{2} = 5 \cdot [1 + (\num{0,5} W/H)^{2}]^{-1}$,
that is when the slowest exponential mode approximately vanishes $\exp(-5) \approx \SI{0,67}{\percent}$.
At the right column of Fig. \ref{idae:fig:i_sim},
the times required by the simulated current
to reach \SI{0,67}{\percent} of its steady-state value are shown with `$\times$'
and correspond roughly to their theoretical counterparts: 4, \num{2,5} and \num{1,3}.

Finally, and independently of using internal or external coun\-ter electrode,
the time response of the current tends to speed up
as the height of the cell $H$ decreases.
This is justified by the shorter distances
that the electrochemical species must travel, due to lower roof of the cell.

\subsubsection{Collection efficiency in steady state}

For finite cell height $H$,
the steady-state current through a pair of electrode bands $A$ and $B$ is equal ($i_{f}^{A} = -i_{f}^{B}$).
This is confirmed by Eq. \eqref{coplanar:eqn:kirchhoff:final} with $p_{x} = 2W$ and Eq. \eqref{idae:eqn:if},
and it is shown in all plots of Fig. \ref{idae:fig:i_sim} after a sufficiently long time.
Therefore, 100\% collection efficiency must be obtained inside a unit cell,
independently of whether the counter electrode is internal or external.

But for cell heights approaching infinite $H \to +\infty$,
the collection efficiency is different for internal and external counter electrodes.
If the counter electrode is internal (one array performs as counter),
then the collection efficiency in the unit cell is automatically 100\%.
However, if the counter electrode is external,
then the collection efficiency is less than 100\%
when the average of the final concentration at the bottom of the unit cell
is forced to a different value than $\bar{c}_{\sigma,i}\unit$.

Collection efficiencies lower than 100\% in steady state,
for external counter electrode and very tall cells $H \to +\infty$,
can be explained by recalling the change in average concentration
at the bottom of a unit cell ($z=0$).
See Eqs. \eqref{coplanar:eqn:delta_bar_c} and \eqref{coplanar:eqn:delta_bar_cf}
\begin{equation}
	\label{idae:eqn:delta_bar_cf}
	\Delta\bar{c}_{\sigma}\unit(z, +\infty)
	= \Delta\bar{c}_{f}\unit
	= \frac{1}{H W L}
	\int_{0^{-}}^{+\infty}
		\frac{1}{F n_{e}}
		\underbrace{
			\int_{0}^{W} j(x,t)\, L \ud{x}
		}_{i^{\text{net}}(t) \text{ in unit cell}}
	\ud{t}
\end{equation}
Since fixing the average concentration at $z = 0$
to a value different than $\bar{c}_{\sigma,i}\unit$ means that 
$\Delta\bar{c}_{\sigma}\unit(0, +\infty) \neq 0$ is fixed to a finite value,
then the time integral $|\int_{0^{-}}^{+\infty} i^{\text{net}}(t) \ud{t}| \to +\infty$
is forced to diverge when $1/H \to 0^{+}$.
The infinite value of this integral is obtained when $i^{\text{net}}(+\infty) \neq 0$,
leading to a collection efficiency that is different from 100\% in steady state.

In this last case,
Corollary \ref{coplanar:cor:bar_cf} breaks due to $i^{\text{net}}(+\infty) \neq 0$,
producing a horizontal average of concentration, locally over the IDAE,
that is not uniform along the $z$-axis.
Therefore, a correction that takes into account
the effect of an external counter electrode (both arrays are potentiostated)
is needed to accurately predict the steady-state current through the IDAE.
This kind of correction was done the semi-empirically in \cite[Eq. (33)]{Aoki:1988:dec}
and later in \cite[Eqs. (13) and (20)]{Morf:2006:may}, both for the case of semi-infinite cells ($H \to +\infty$).

\subsection{Effect of net current on the average concentration}

The net current entering the unit cell
plays a determinant role on the horizontal average of concentration
for the entire unit cell at steady state.

As seen in the previous sections, a change of average concentration
at the bottom of the unit cell $\Delta\bar{c}_{\sigma}\unit(0,t)$
produces a non-zero net current due to Eq. \eqref{coplanar:eqn:i_net}.
Subsequently, this net current produces a change in horizontal average of concentration
at the entire unit cell $\Delta\bar{c}_{\sigma}\unit(z,t)$, due to Eq. \eqref{coplanar:eqn:delta_bar_c},
which reaches a steady state $\Delta\bar{c}_{f}\unit$
that is uniform $\forall z$ and independent of the electrochemical species $\sigma$, see Eq. \eqref{idae:eqn:delta_bar_cf}.

Therefore, the horizontal averages at the entire unit cell
for the final and initial steady sates are, in general, different
($\bar{c}_{\sigma,f}\unit \neq \bar{c}_{\sigma,i}\unit$) and this difference
($\bar{c}_{\sigma,f}\unit = \bar{c}_{\sigma,i}\unit \pm \Delta \bar{c}_{f}\unit$)
depends on the net current during the transition
from the initial towards the final state,
as seen in Eq. (\ref{idae:eqn:delta_bar_cf}).

If the net current is different from zero
during some finite time interval,
the currents through the generator and collector are different
and accumulation (or depletion) of species occurs inside the unit cell.
This generates the deviation of $\bar{c}_{\sigma,f}\unit$ with respect to $\bar{c}_{\sigma,i}\unit$.
Conversely, if the net current is zero for all $t$,
the currents at the generator and collector are equal
and no accumulation (or depletion) of species occurs.

In case of the simulation in Eq. \eqref{idae:eqn:pde:idae:normalized},
$\bar{c}_{\sigma,f}\unit$ must be given by 
the arithmetic average on both electrodes, since $2w_{A} = 2w_{B}$
\begin{equation}
	\frac{1}{W} \int_{0}^{W} c_{\sigma,f}(x,z) \ud{x}
	= \underbrace{\frac{c_{\sigma,f}^{A} + c_{\sigma,f}^{B}}{2} }_{ \bar{c}_{\sigma,f}\unit }
	\Leftrightarrow
	\frac{1}{W} \int_{0}^{W} \xi\simu(x,z,+\infty) \ud{x} = \frac{1}{2}
\end{equation}
The expression
$\bar{c}_{\sigma,f}\unit = \bar{c}_{\sigma,i}\unit \pm \Delta \bar{c}_{f}\unit$
has its simulated counterpart given by
\begin{equation}
	\bar{c}_{\sigma,f}\unit
	= \bar{c}_{\sigma,i}\unit \pm \Delta \bar{c}_{f}\unit
	\Leftrightarrow
	\frac{1}{2} 
	= \xi\simu(x,z,0^{-})
	\pm \frac{\Delta\bar{c}_{f}\unit}{[c_{\sigma,f}^{B} - c_{\sigma,f}^{A}]}
\end{equation}
meaning that the final average for the simulation must be $1/2$
independently of the initial concentration $\xi\simu(x,z,0^{-})$.
Also the change in horizontal average from Eq. \eqref{idae:eqn:delta_bar_cf}
was normalized to obtain
\begin{equation}
	\label{idae:eqn:delta_bar_cf_sim}
	\frac{\Delta\bar{c}_{f}\unit}{[c_{\sigma,f}^{B} - c_{\sigma,f}^{A}]}
	= \pm \frac{1}{\pi^{2}} \frac{W}{H}
	\int_{0^{-}}^{+\infty}
		\underbrace{
			\int_{0}^{1}
				-\parderiv{\xi\simu}{z\simu}(x,0,t)
			\ud{x\simu}
		}_{
			i\simu^{\text{net}}(t) 
			= i\simu^{A/2}(t) + i\simu^{B/2}(t)
		}
	\ud{t\simu}
\end{equation}
All normalizations were obtained by applying Eq. \eqref{idae:eqn:pde:normalization}

Fig \ref{idae:fig:i_sim} shows that, when $\xi\simu(x,z,0^{-}) = \num{0,25}$,
the horizontal average reaches $1/2$ in the final state, due to $i\simu^{\text{net}}(t) \geq 0$.
In this case, the change in horizontal average was obtained numerically
by computing the time integral of $i\simu^{\text{net}}(t)$ in Eq. \eqref{idae:eqn:delta_bar_cf_sim},
and approaches its theoretical value
$\pm \Delta\bar{c}_{f}\unit/[c_{\sigma,f}^{B} - c_{\sigma,f}^{A}] = \num{0,25}$
up to two decimal places for all simulated aspect ratios $H/W$
(see also Supplementary Information \S\ref{data:s32:simulation} for full numerical values).
On the other hand, when $\xi\simu(x,z,0^{-}) = \num{0,5}$,
the simulated net current $i\simu^{\text{net}}(t)$ equals zero for all $t$.
This produces no change in horizontal average
$\Delta\bar{c}_{f}\unit/[c_{\sigma,f}^{B} - c_{\sigma,f}^{A}] = 0$,
such that it can be maintained at $1/2$  until the final steady state ($\bar{c}_{\sigma,f}\unit = \bar{c}_{\sigma,i}\unit$).

Despite of not being mentioned explicitly in the literature,
earlier results showing change in average (bulk) concentration, due to non-zero net currents,
can also be found in the simulations of \cite[Fig. 5 and Eq. (4) with boundary conditions for coplanar electrodes]{Strutwolf:2005:feb}.

\subsection{Constraints on the limiting current}

\begin{figure}[t]
	\centering
	\subcaptionbox{
		\label{idae:fig:cell:unit}
		IDAE unit cell.}{\includegraphics{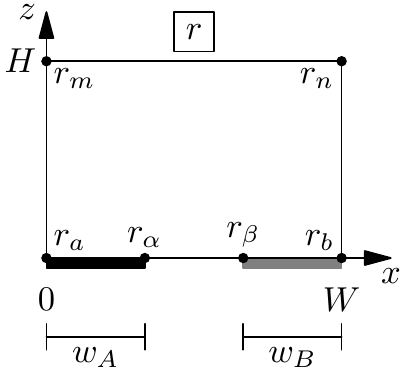}}
	\subcaptionbox{
		\label{idae:fig:cell:parallel}
		Transformed unit cell.}[15em]{\includegraphics{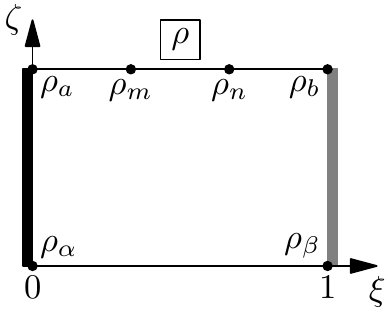}}  
	\caption{
		Complex transformation $\bm{\rho} = T(\bm{r})$ of the unit cell
		from IDAE domain $\bm{r} = (x,z)$
		to parallel plates domain $\bm{\rho} = (\xi, \zeta)$.
		The boundary points $\bm{r}_{p}$ are bijectively mapped to $\bm{\rho}_{p}$,
		where $p \in \{a, \alpha, \beta, b, m, n\}$.
		This transformation can be obtained following a similar process to the one stated in \cite{Aoki:1988:dec},
		and its conformality is ensured by the conformality of Möbius functions \cite[\S5.7]{Ablowitz:2003:apr}
		and by the conformality of \emph{Schwarz-Christoffel} transformations \cite[Theorem 5.6.1]{Ablowitz:2003:apr}.
	}
	\label{idae:fig:cell:transformation}
\end{figure}

Before presenting the results on constraints for the limiting current,
it is convenient to show that, under certain conditions,
the current in steady state through the IDAE is proportional to the difference of concentration on both arrays.

\begin{lema}
	\label{idae:lem:if}
	Consider an IDAE electrochemical cell under the assumptions of \S\ref{idae:promedios}.
	If the final concentration of species $\sigma \in \{O,R\}$ at each electrode band $E \in \{A,B\}$ is uniform and equal to $c_{\sigma,f}^{E}$,
	then the current in the final steady state $i_{f}^{E}$
	through a band $E \in \{A,B\}$ is proportional to the difference $[c_{\sigma,f}^{E} - c_{\sigma,f}^{E'}]$
	\begin{equation}
		\label{idae:eqn:if}
		\pm \frac{
			i_{f}^{A}/L
		}{
			F n_{e} D_{\sigma} [c_{\sigma,f}^{A} - c_{\sigma,f}^{B}]
		}
		= \pm \frac{
			i_{f}^{B}/L
		}{
			F n_{e} D_{\sigma} [c_{\sigma,f}^{B} - c_{\sigma,f}^{A}]
		}
		= 2 \zeta(0,0)
	\end{equation}
	where $E'$ is the complementary band of $E$,
	and $\zeta(x,z)$ corresponds to the imaginary part of the conformal transformation $(\xi,\zeta) = T(x,z)$ shown in Fig.\ref{idae:fig:cell:transformation}.
\end{lema}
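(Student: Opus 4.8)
The plan is to use the fact that in the final steady state the concentration solves Laplace's equation, turning the problem into a two-dimensional potential (capacitor) problem that the conformal map $T$ of Fig.~\ref{idae:fig:cell:transformation} makes trivial. First I would set $\partial c_{\sigma}/\partial t = 0$ in Eqs.~\eqref{coplanar:eqn:difusion:idae} to obtain $\nabla^{2} c_{\sigma,f} = 0$ on the unit cell, with Dirichlet data $c_{\sigma,f} = c_{\sigma,f}^{E}$ on each band $E$ (the concentration there is uniform by hypothesis) and homogeneous Neumann data $\partial c_{\sigma,f}/\partial n = 0$ on the roof $z=H$, on the insulating gaps of $z=0$, and on the lateral symmetry planes $x = 0, W$. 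It is then convenient to pass to the normalized capacitance problem $\phi = (c_{\sigma,f} - c_{\sigma,f}^{A})/(c_{\sigma,f}^{B} - c_{\sigma,f}^{A})$, which is harmonic with $\phi = 0$ on $A$, $\phi = 1$ on $B$, and $\partial\phi/\partial n = 0$ on the remaining boundary.

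The second step is to rewrite the electrode current as a flux of $\phi$. Using the flux boundary condition $\mp D_{\sigma}\,\partial c_{\sigma}/\partial z = j/(F n_{e})$ and $c_{\sigma,f} = c_{\sigma,f}^{A} + (c_{\sigma,f}^{B} - c_{\sigma,f}^{A})\phi$, the steady current through band $A$ is
\begin{equation}
	\frac{i_{f}^{A}}{L} = \int_{A} j_{f}(x)\ud{x}
	= \mp F n_{e} D_{\sigma}\,[c_{\sigma,f}^{B} - c_{\sigma,f}^{A}]
	\int_{A} \parderiv{\phi}{z}(x,0)\ud{x}.
\end{equation}
Because $\phi$ is harmonic it admits a harmonic conjugate (stream function) $\psi$, and the integral of the normal derivative of $\phi$ over $A$ equals the variation of $\psi$ along $A$. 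Hence $i_{f}^{A}$ is already a purely geometric constant times the concentration difference, which is the claimed proportionality; what remains is to evaluate the geometric constant.

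For that I would invoke the conformal transformation $(\xi,\zeta) = T(x,z)$. Since conformal maps preserve harmonicity and send $\partial/\partial n = 0$ boundaries to $\partial/\partial n = 0$ boundaries, $T$ carries the mixed boundary-value problem for $\phi$ to the parallel-plate problem in the $(\xi,\zeta)$ domain, whose solution is affine in one coordinate; by the construction of $T$ the isoconcentration lines are the curves $\xi = \operatorname{Re} T = \text{const}$ and the streamlines are $\zeta = \operatorname{Im} T = \text{const}$, so $\psi$ is (up to the normalization fixed by $T$) an affine function of $\zeta$. Conformal invariance of the total flux then lets me evaluate $\int_{A}\partial\phi/\partial n\ud{s}$ as the variation of $\zeta$ along the image of the half-band $A$: the band edge maps to $\zeta = 0$ and the symmetry corner $(0,0)$ maps to $\zeta(0,0)$, while the mirror symmetry that assembles two such half-bands into the full band $A$ supplies the factor $2$, giving $2\zeta(0,0)$. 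Substituting yields the middle and right members of Eq.~\eqref{idae:eqn:if}; the equality of the two ratios follows from $i_{f}^{A} = -i_{f}^{B}$ (Eq.~\eqref{coplanar:eqn:kirchhoff:final} with $p_{x} = 2W$) together with $[c_{\sigma,f}^{A} - c_{\sigma,f}^{B}] = -[c_{\sigma,f}^{B} - c_{\sigma,f}^{A}]$.

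The main obstacle I anticipate is the bookkeeping in this last step: checking that $T$ genuinely maps the two electrodes to the plates and the insulating and symmetry segments to the no-flux edges (so that the Neumann conditions are preserved and land on the correct sides), and then pinning down the normalization and orientation of $T$ so that the geometric factor is exactly $2\zeta(0,0)$ rather than another multiple of $\operatorname{Im} T$ evaluated at a boundary point. Existence and regularity of $T$ are guaranteed by the conformality of the Möbius and Schwarz--Christoffel maps quoted in the caption of Fig.~\ref{idae:fig:cell:transformation}, so the real work lies in tracking this constant and its sign.
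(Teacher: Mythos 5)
Your proposal is correct and follows essentially the same route as the paper: steady-state harmonicity, the affine solution in the parallel-plate image of the conformal map $T$, pull-back to the IDAE domain, and conversion of the flux integral into boundary values of $\zeta$ via the Cauchy--Riemann relations (your stream function $\psi$ is exactly the paper's $\zeta = \Im\,T$), with the mirror symmetry supplying the factor $2\zeta(0,0)$. The only cosmetic difference is that you obtain the $B$-band ratio from Kirchhoff's law $i_{f}^{A}=-i_{f}^{B}$ rather than by evaluating $-\Im\int_{B}\partial\bm{\rho}$ directly as the paper does.
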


Note that this result is valid in steady state, that is,
after a sufficiently long time,
comparable with the time constant of the unit cell.
See Eqs. \eqref{idae:eqn:tau:extC} and \eqref{idae:eqn:tau:intC}
for external and internal counter electrode respectively.

\begin{proof}
	Consider the parallel-plates cell in Fig. \ref{idae:fig:cell:parallel}.
	It is known that the concentration profile $\gamma_{\sigma,f}(\xi,\zeta)$
	of species $\sigma \in \{O,R\}$ in the final steady state
	is given by a linear interpolation of the concentration at its electrodes
	\begin{subequations}
		\begin{align}
		\gamma_{\sigma,f}(\xi,\zeta)
		&= c_{\sigma,f}^{A} + [c_{\sigma,f}^{B} - c_{\sigma,f}^{A}] \xi
		\\
		\intertext{
			By returning to the IDAE domain
			$c_{\sigma,f}(x,z) = \gamma_{\sigma,f}(\xi,\zeta)$
			through the domain transformation $(\xi,\zeta) = T(x,z)$
			one obtains
		}
		c_{\sigma,f}(x,z)
		&= c_{\sigma,f}^{A} + [c_{\sigma,f}^{B} - c_{\sigma,f}^{A}]\xi(x,z)
		\end{align}
	\end{subequations}
	where $\xi(x,z)$ corresponds to the real part of the conformal transformation $(\xi,\zeta) = T(x,z)$.
	
	The current in final steady state $i_{f}^{E}$ can be obtained
	by integrating the flux through one electrode band $E \in \{A,B\}$
	\begin{equation}
		i_{f}^{E}
		= \mp \int_{E} F n_{e}\, D_{\sigma} \parderiv{c_{\sigma,f}}{z}(x,0) L \ud{x}
		= \mp \int_{E} F n_{e}\, D_{\sigma} [c_{\sigma,f}^{B} - c_{\sigma,f}^{A}] \parderiv{\xi}{z}(x,0) L \ud{x}
	\end{equation}
	Using the Cauchy-Riemann identities \cite[Theorem 3.2]{Olver:2017:} for $\bm{\rho} = T(\bm{r})$
	\begin{equation}
		\parderiv{\xi}{z} = -\parderiv{\zeta}{x} = -\Im \parderiv{\bm{\rho}}{x}
	\end{equation}
	the current can be further simplified
	\begin{equation}
		\label{idae:eqn:int_dxi7dz_dx}
		\mp \frac{
			i_{f}^{E}/L
		}{
			F n_{e} D_{\sigma} [c_{\sigma,f}^{B} - c_{\sigma,f}^{A}]
		}
		= \int_{E} \parderiv{\xi}{z}(x,0) \ud{x}
		= -\Im \int_{E} \parderiv{\bm{\rho}}{x}(x,0) \ud{x}
	\end{equation}
	Due to symmetry, this integral can be taken in half electrode band
	\begin{subequations}
		\begin{align}
			-\Im \int_{A} \partial\bm{\rho}(\bm{r})
			&= -2\,\Im\bm{\rho}(\bm{r})\Big|_{\bm{r}_{a}}^{\bm{r}_{\alpha}}
			= +2\, \Im \bm{\rho}(\bm{r}_{a})
			= +2\, \zeta(0,0)
			\\
			-\Im \int_{B} \partial\bm{\rho}(\bm{r})
			&= -2\,\Im\bm{\rho}(\bm{r})\Big|_{\bm{r}_{\beta}}^{\bm{r}_{b}}
			= -2\, \Im \bm{\rho}(\bm{r}_{b})
			= -2\, \zeta(0,0)
		\end{align}
	\end{subequations}
	where the imaginary parts $\Im\bm{\rho}_{\beta} = \Im\bm{\rho}_{\alpha} = 0$
	and $\Im\bm{\rho}_{b} = \Im\bm{\rho}_{a} = \zeta(0,0)$
	lead to the result in Eq. (\ref{idae:eqn:if}).
\end{proof}

Once it is clear that the current $i_{f}^{E}$ is proportional to
the difference of concentration between both arrays $[c_{\sigma,f}^{E} - c_{\sigma,f}^{E'}]$,
then it can be shown that non-negative concentrations,
together with the properties of \emph{horizontal average} and \emph{weighted sum} of concentrations,
restrict the maximum current that the cell can produce by directly limiting the difference $[c_{\sigma,f}^{E} - c_{\sigma,f}^{E'}]$.

\begin{teorema}
	\label{idae:teo:cE-cE':intC}
	Consider an IDAE electrochemical cell under the assumptions of \S\ref{idae:promedios}.
	Assume also that the array of bands $E \in \{A,B\}$ and its complementary array of bands $E'$
	perform as working and counter electrodes respectively,
	see Fig. \ref{idae:fig:cell:intC}.

	If the concentrations of species $\sigma \in \{O,R\}$ on both
	arrays are uniform and equal to $c_{\sigma}^{E}(t)$ and $c_{\sigma}^{E'}(t)$,
	and the bands have equal width $2w_{A} = 2w_{B}$, 
	then the concentrations on the working and counter electrodes
	and their difference are related for all $t$
	\begin{equation}
		\label{idae:eqn:cE':intC}
		2[c_{\sigma}^{E}(t) - \bar{c}_{\sigma,i}\whole]
		= -2[c_{\sigma}^{E'}(t) - \bar{c}_{\sigma,i}\whole]
		= [c_{\sigma}^{E}(t) - c_{\sigma}^{E'}(t)]
	\end{equation}
	where $\bar{c}_{\sigma,i}\whole=\bar{c}_{\sigma,i}\unit$
	are the horizontal averages in the initial steady state,
	defined in Eqs. \eqref{idae:eqn:bar_ci:whole} and \eqref{idae:eqn:bar_ci:unit}.
	
	Moreover, the difference of concentrations in steady state
	is limited from above and below by
	\begin{equation}
		\label{idae:eqn:cotas:intC}
		-2 D_{\lambda} \bar{c}_{\lambda,i}\whole
		\leq D_{\sigma} [c_{\sigma,f}^{E} - c_{\sigma,f}^{E'}]
		\leq 2 D_{\lambda} \bar{c}_{\lambda,i}\whole
	\end{equation}
	where the determinant species $\lambda \in \{O,R\}$ is such that
	$D_{\lambda} \bar{c}_{\lambda,i}\whole = \min(D_{O} \bar{c}_{O,i}\whole, D_{R} \bar{c}_{R,i}\whole)$,
	$c_{\sigma,f}^{E} = c_{\sigma}^{E}(+\infty)$,
	and $c_{\sigma,f}^{E'} = c_{\sigma}^{E'}(+\infty)$.
	This last expression determines the limiting current of the cell.
\end{teorema}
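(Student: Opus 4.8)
The plan is to first reduce the chain of equalities \eqref{idae:eqn:cE':intC} to the single identity $\frac{1}{2}[c_{\sigma}^{E}(t) + c_{\sigma}^{E'}(t)] = \bar{c}_{\sigma,i}\whole$, since both displayed equalities are algebraically equivalent to it, and then to obtain the bounds \eqref{idae:eqn:cotas:intC} by combining this identity in the final steady state with non-negativity of concentrations and the weighted-sum conservation of Corollary \ref{coplanar:cor:cf:total}. The two load-bearing facts are: (i) for an internal counter electrode the horizontal average over the unit cell is frozen at its initial value for all $t$, and (ii) equal band widths force that horizontal average to coincide with the arithmetic mean of the two electrode concentrations.

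For (i), since one array acts as counter, the whole cell contains its own counter electrode, so as argued in \S\ref{idae:promedios} the whole-cell average is constant and equals the unit-cell average, $\bar{c}_{\sigma,i}\whole = \bar{c}_{\sigma,i}\unit$; equivalently the net current in the unit cell vanishes for all $t$, so Corollary \ref{coplanar:cor:delta_bar_c} gives $\Delta\bar{c}_{\sigma}\unit(0,t) = 0$ and hence $\frac{1}{W}\int_{0}^{W} c_{\sigma}(x,0,t)\ud{x} = \bar{c}_{\sigma,i}\whole$ for every $t$. For (ii) in the final steady state I would invoke the linear interpolation used in the proof of Lemma \ref{idae:lem:if}, $c_{\sigma,f}(x,z) = c_{\sigma,f}^{A} + [c_{\sigma,f}^{B} - c_{\sigma,f}^{A}]\,\xi(x,z)$, together with the reflection symmetry of the unit cell under $x \mapsto W - x$ (valid because $2w_{A} = 2w_{B}$), which sends the conformal coordinate to $\xi(W-x,z) = 1 - \xi(x,z)$ and therefore makes $\frac{1}{W}\int_0^W \xi(x,0)\ud{x} = \frac{1}{2}$. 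Averaging the interpolation then yields $\bar{c}_{\sigma,f}\unit = \frac{1}{2}[c_{\sigma,f}^{E} + c_{\sigma,f}^{E'}]$, and equating with (i) gives the required identity in steady state; for general $t$ the same conclusion follows from the reflection antisymmetry $c_{\sigma}(x,z,t) + c_{\sigma}(W-x,z,t) = c_{\sigma}^{E}(t) + c_{\sigma}^{E'}(t)$, so that integrating over $x$ turns the frozen average of (i) into the arithmetic mean.

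With the identity in hand I would write, for the final steady state, $c_{\sigma,f}^{E} = \bar{c}_{\sigma,i}\whole + \frac{1}{2}[c_{\sigma,f}^{E} - c_{\sigma,f}^{E'}]$ and $c_{\sigma,f}^{E'} = \bar{c}_{\sigma,i}\whole - \frac{1}{2}[c_{\sigma,f}^{E} - c_{\sigma,f}^{E'}]$. Non-negativity of both electrode concentrations immediately gives the per-species bound $|D_{\sigma}[c_{\sigma,f}^{E} - c_{\sigma,f}^{E'}]| \le 2 D_{\sigma}\bar{c}_{\sigma,i}\whole$. The crucial coupling comes from evaluating the weighted-sum conservation of Corollary \ref{coplanar:cor:cf:total} (with $\Delta\bar{c}_{f} = 0$ here, since the net current vanishes) at electrode $E$: substituting the two displayed expressions yields $D_{O}[c_{O,f}^{E} - c_{O,f}^{E'}] = -D_{R}[c_{R,f}^{E} - c_{R,f}^{E'}]$, so the two products $|D_{\sigma}[c_{\sigma,f}^{E} - c_{\sigma,f}^{E'}]|$ are equal. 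Being equal, their common value must satisfy both per-species bounds simultaneously, hence it is at most the smaller one, $2 D_{\lambda}\bar{c}_{\lambda,i}\whole$ with $\lambda$ the minimizing species; this is exactly \eqref{idae:eqn:cotas:intC}.

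The main obstacle is the transient half of step (ii): the reflection antisymmetry $c_{\sigma}(x,z,t) + c_{\sigma}(W-x,z,t) = c_{\sigma}^{E}(t) + c_{\sigma}^{E'}(t)$ holds pointwise only when the sum $c_{\sigma}^{E}(t) + c_{\sigma}^{E'}(t)$ is constant in time, since otherwise a spatially uniform field cannot satisfy the diffusion equation, and this is essentially the statement being proved; making it rigorous requires the initial state to be reflection-symmetric (for instance a uniform equilibrium), so that the symmetric part of the field inherits spatially constant electrode data and the frozen average of (i) can be promoted to the arithmetic-mean identity for every $t$. For the bounds themselves only the final steady state is needed, where this difficulty disappears and the argument rests cleanly on Lemma \ref{idae:lem:if} and the conformal symmetry. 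The second delicate point is the cross-species step: one must recognize that it is the weighted-sum conservation, and not the non-negativity of any single species, that forces the two difference-products to share a magnitude and thereby produces the minimum appearing in \eqref{idae:eqn:cotas:intC}.
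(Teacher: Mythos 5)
Your treatment of the bounds \eqref{idae:eqn:cotas:intC} is in substance identical to the paper's: non-negativity of the electrode concentrations gives the per-species bound $|D_{\sigma}[c_{\sigma,f}^{E}-c_{\sigma,f}^{E'}]|\leq 2D_{\sigma}\bar{c}_{\sigma,i}\whole$, Corollary \ref{coplanar:cor:cf:total} with $\Delta\bar{c}_{f}\unit=0$ evaluated on the bands forces $D_{O}[c_{O,f}^{E}-c_{O,f}^{E'}]=-D_{R}[c_{R,f}^{E}-c_{R,f}^{E'}]$, and the shared magnitude must obey the smaller of the two bounds, which is where the $\min$ and the determinant species $\lambda$ come from. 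For the identity \eqref{idae:eqn:cE':intC} you and the paper also share the first half (zero net current in the unit cell for an internal counter electrode, hence a frozen horizontal average by Corollary \ref{coplanar:cor:delta_bar_c}), but you diverge on how the full-width average is reduced to the arithmetic mean of the two electrode concentrations. The paper simply asserts that, for $2w_{A}=2w_{B}$, the gap integral $\int_{w_{A}}^{W-w_{B}}[c_{\sigma}(x,z,t)-\bar{c}_{\sigma,i}\whole]\ud{x}$ vanishes for all $t$, so only the electrode contributions survive; you instead invoke the conformal interpolation of Lemma \ref{idae:lem:if} together with $\xi(W-x,z)=1-\xi(x,z)$, which settles the steady state (all that the bounds actually require), and then attempt a reflection identity for general $t$.

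The transient difficulty you flag is real, but be aware that the paper does not resolve it either: its unproven assertion about the gap integral is exactly equivalent to the statement that the symmetrized field carries no excess mass, which is what you are trying to establish. Your proposed pointwise identity $c_{\sigma}(x,z,t)+c_{\sigma}(W-x,z,t)=c_{\sigma}^{E}(t)+c_{\sigma}^{E'}(t)$ is indeed too strong (a spatially uniform, time-varying field cannot solve the diffusion equation), but the conclusion you need is weaker and can be closed as follows: split $u=c_{\sigma}-\bar{c}_{\sigma,i}\whole$ into parts symmetric and antisymmetric under $x\mapsto W-x$. The antisymmetric part integrates to zero over the reflection-invariant gap and over either height line, so the symmetric part $u_{s}$ inherits zero horizontal average at every $z$ and every $t$; it solves the heat equation with zero initial data, homogeneous Neumann conditions except for the common Dirichlet value $S(t)=\tfrac{1}{2}[c_{\sigma}^{E}(t)+c_{\sigma}^{E'}(t)]-\bar{c}_{\sigma,i}\whole$ on both half-bands. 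In Laplace domain $u_{s}$ is $\laplace S(s)$ times a fixed profile whose horizontal average is strictly positive for real $s>0$ by the maximum principle, so the zero-average constraint forces $S\equiv 0$, which is \eqref{idae:eqn:cE':intC} for all $t$. With that supplement your argument is complete; as written, it fully proves the steady-state identity and the bounds, i.e.\ everything needed for the limiting-current conclusion.
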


It is important to note that Eq. \eqref{idae:eqn:cotas:intC}
is valid in steady state, that is, after a sufficiently long time,
comparable with the time constant of the unit cell,
see Eq. \eqref{idae:eqn:tau:intC} for internal counter electrode.
However, this expression can hold also $\forall t$
under the additional restriction $D_{O} = D_{R}$
(when applying Corollary \ref{coplanar:cor:ct:total}
instead of Corollary \ref{coplanar:cor:cf:total}
in Eqs. \eqref{idae:eqn:cf:unit:total}).

Note also that Eq. (\ref{idae:eqn:cotas:intC}) imply the necessity of
the simultaneous presence of $\bar{c}_{O,i}\whole$ and $\bar{c}_{R,i}\whole$
in order to achieve steady state currents.
This was also noted by \cite[before and after Eq. (17)]{Morf:2006:may}
and explained in \cite[\S2.3 and Fig. 2]{GuajardoYevenes:2013:sep},
both for the case when $D_{O} = D_{R}$.

\begin{proof}
	Consider the properties of horizontal averages in Eqs. \eqref{idae:eqn:bar_c:unit}.
	Since the counter electrode is internal to the IDAE,
	then the net current in the unit cell is zero for all $t$,
	therefore $\Delta \bar{c}_{\sigma,f}\unit(z,t) = \Delta \bar{c}_{f}\unit = 0$.
	Also since $\bar{c}_{\sigma,i}\unit = \bar{c}_{\sigma,i}\whole$ for Fig. \ref{idae:fig:cell:intC},
	then the average properties can be summarized for all $t$ as
	\begin{subequations}
		\begin{align}
			\int_{0}^{W} c_{\sigma}(x,z,t) - \bar{c}_{\sigma,i}\whole \ud{x} &= 0
			\\
			\intertext{%
				Due to bands of equal widths $2w_{A} = 2w_{B}$,
				the integral in the gap between consecutive bands
				$\int_{w_{A}}^{W - w_{B}} c_{\sigma}(x,z,t) - \bar{c}_{\sigma,i}\whole \ud{x}$
				 equals zero for all $t$.
				Therefore, the average property is reduced to
			}
			\label{idae:eqn:average:intC}
			[c_{\sigma}^{E}(t) - \bar{c}_{\sigma,i}\whole]
				+ [c_{\sigma}^{E'}(t) - \bar{c}_{\sigma,i}\whole] &= 0
		\end{align}
	\end{subequations}
	where $E'$ is the complementary band of $E \in \{A,B\}$.
	This agrees with \cite[Eq. (15)]{Morf:2006:may}.
	Finally, the difference of concentration between electrodes
	can be reduced to Eq. (\ref{idae:eqn:cE':intC}) since
	\begin{equation}
		c_{\sigma}^{E}(t) - c_{\sigma}^{E'}(t)
		= [c_{\sigma}^{E}(t) - \bar{c}_{\sigma,i}\whole]
		- [c_{\sigma}^{E'}(t) - \bar{c}_{\sigma,i}\whole]
	\end{equation}
	
	To obtain the limits for the difference of concentration in steady state,
	one considers Eq. \eqref{idae:eqn:average:intC} with non-negative concentrations on all electrodes
	\begin{subequations}
		\begin{align}
			-\bar{c}_{\sigma,i}\whole \leq [c_{\sigma,f}^{E} - \bar{c}_{\sigma,i}\whole]
			&= -[c_{\sigma,f}^{E'} - \bar{c}_{\sigma,i}\whole] \leq \bar{c}_{\sigma,i}\whole
			\\ {}
			-\bar{c}_{\sigma',i}\whole \leq [c_{\sigma',f}^{E} - \bar{c}_{\sigma',i}\whole]
			&= -[c_{\sigma',f}^{E'} - \bar{c}_{\sigma',i}\whole] \leq \bar{c}_{\sigma',i}\whole
		\end{align}
	\end{subequations}
	where $\sigma'$ is the complementary species of $\sigma \in \{O,R\}$.
	The limits for the concentration of species $\sigma'$
	may also affect the limits for the concentration of species $\sigma$.
	This can be seen by applying Corollary \ref{coplanar:cor:cf:total}
	(with $p_{x} = 2W$, $\Delta\bar{c}_{f}\unit = 0$
	and $\bar{c}_{\sigma,i}\unit = \bar{c}_{\sigma,i}\whole$)
	at the bands $E$ and $E'$
	\begin{subequations}
		\label{idae:eqn:cf:unit:total}
		\begin{align}
			D_{\sigma} [c_{\sigma,f}^{E} - \bar{c}_{\sigma,i}\whole]
			+ D_{\sigma'} [c_{\sigma',f}^{E} - \bar{c}_{\sigma',i}\whole] &= 0
			\\
			D_{\sigma} [c_{\sigma,f}^{E'} - \bar{c}_{\sigma,i}\whole]
			+ D_{\sigma'} [c_{\sigma',f}^{E'} - \bar{c}_{\sigma',i}\whole] &= 0
		\end{align}
	\end{subequations}
	and agrees with \cite[Eqs. (15) and (16)]{Morf:2006:may} when $D_{O} = D_{R}$.
	Combining the last two expressions leads to
	\begin{subequations}
		\begin{align}
			-D_{\sigma} \bar{c}_{\sigma,i}\whole
			\leq D_{\sigma} [c_{\sigma,f}^{E} - \bar{c}_{\sigma,i}\whole]
			&= -D_{\sigma} [c_{\sigma,f}^{E'} - \bar{c}_{\sigma,i}\whole]
			\leq D_{\sigma} \bar{c}_{\sigma,i}\whole
			\\ {}
			-D_{\sigma'} \bar{c}_{\sigma',i}\whole
			\leq D_{\sigma} [c_{\sigma,f}^{E} - \bar{c}_{\sigma,i}\whole]
			&= -D_{\sigma} [c_{\sigma,f}^{E'} - \bar{c}_{\sigma,i}\whole]
			\leq D_{\sigma'} \bar{c}_{\sigma',i}\whole
		\end{align}
	\end{subequations}
	Let $D_{\lambda} \bar{c}_{\lambda,i}\whole = \min(D_{O} \bar{c}_{O,i}\whole, D_{R} \bar{c}_{R,i}\whole)$,
	then the previous expressions can be summarized as
	\begin{equation}
		-D_{\lambda} \bar{c}_{\lambda,i}\whole
		\leq D_{\sigma} [c_{\sigma,f}^{E} - \bar{c}_{\sigma,i}\whole]
		= -D_{\sigma} [c_{\sigma,f}^{E'} - \bar{c}_{\sigma,i}\whole]
		\leq D_{\lambda} \bar{c}_{\lambda,i}\whole
	\end{equation}
	which leads to Eq. (\ref{idae:eqn:cotas:intC}).
\end{proof}

\begin{teorema}
	\label{idae:teo:cE-cE':extC}
	Consider an IDAE electrochemical cell under the assumptions of \S\ref{idae:promedios},
	but now with an external counter electrode as in Fig. \ref{idae:fig:cell:extC}.
	Assume also that the array of bands $E \in \{A,B\}$ is freely potentiostated,
	and its complementary array of bands $E'$ is also potentiostated,
	but fixed to a very extreme potential,
	such that the concentration of $\sigma \in \{O,R\}$ on its surface is $c_{\sigma}^{E'}(t) = 0$.
	
	If the IDAE has bands of equal width $2w_{A} = 2w_{B}$,
	the width of the counter electrode equals the width of the IDAE $w_{C} = 2W N_{E}$,
	and the integral $\int_{2W N_{E}}^{W_{T}-w_{C}} c_{\sigma}(x,0,t) - \bar{c}_{\sigma,i}\whole \ud{x} \approx 0$
	in the gap between the IDAE and the counter electrode,
	then the concentrations on the bands $c_{\sigma}^{E}(t)$
	and on the counter electrode $c_{\sigma}^{C}(t)$ are related for all $t$ by
	\begin{equation}
		\label{idae:eqn:cC:extC}
		\left[ \frac{c_{\sigma}^{E}(t)}{2} - \bar{c}_{\sigma,i}\whole \right]
		\approx -[c_{\sigma}^{C}(t) - \bar{c}_{\sigma,i}\whole]
	\end{equation}
	where $\bar{c}_{\sigma,i}\whole$ is the horizontal average
	in the initial steady state, defined in Eq. \eqref{idae:eqn:bar_ci:whole}.
	
	Moreover, when $\sigma = \lambda$, the difference of final concentrations
	is limited from above and below by
	\begin{subequations}
		\label{idae:eqn:cotas:extC}
		\begin{equation}
			0 
			\leq [c_{\lambda,f}^{E} - c_{\lambda,f}^{E'}] = c_{\lambda,f}^{E}
			\lesssim 4 \bar{c}_{\lambda,i}\whole
		\end{equation}
		and a similar situation occurs to its counter electrode,
		of which its concentration is limited from above and below by
		\begin{equation}
			0
			\leq c_{\lambda,f}^{C} 
			\lesssim 2 \bar{c}_{\lambda,i}\whole
		\end{equation}
	\end{subequations}
	where the determinant species $\lambda \in \{O,R\}$ is such that
	$D_{\lambda} \bar{c}_{\lambda,i}\whole = \min(D_{O} \bar{c}_{O,i}\whole, D_{R} \bar{c}_{R,i}\whole)$,
	$\lambda'$ is its complementary species,
	$c_{\lambda,f}^{E} = c_{\lambda}^{E}(+\infty)$,
	$c_{\lambda,f}^{E'} = c_{\lambda}^{E'}(+\infty)$
	and $c_{\lambda,f}^{C} = c_{\lambda}^{C}(+\infty)$.
	These last two expressions determine the limiting current of the cell
	when the electrochemical species satisfy
	\begin{equation}
		\label{idae:eqn:lambda':extC}
		D_{\lambda'} \bar{c}_{\lambda',i}\whole
		\geq 3 D_{\lambda} \bar{c}_{\lambda,i}\whole
	\end{equation}
\end{teorema}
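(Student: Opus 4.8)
The plan is to mirror the structure of the proof of Theorem~\ref{idae:teo:cE-cE':intC}, replacing its vanishing-net-current argument (unavailable here, since with an external counter the net current in each IDAE unit cell need not vanish) with the global conservation of the horizontal average over the \emph{whole} cell. First I would invoke Eq.~\eqref{idae:eqn:bar_ct:whole}, which holds because the whole cell is bounded by insulating walls and always contains its own counter electrode, so that $\Delta\bar{c}_{\sigma}\whole(z,t)=0$ and hence $\frac{1}{W_{T}}\int_{0}^{W_{T}}[c_{\sigma}(x,0,t)-\bar{c}_{\sigma,i}\whole]\ud{x}=0$ for all $t$. I would then split this integral over the three bottom regions: the IDAE $[0,2WN_{E}]$, the gap $[2WN_{E},W_{T}-w_{C}]$, and the counter electrode $[W_{T}-w_{C},W_{T}]$ of width $w_{C}=2WN_{E}$.

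The gap term is discarded by the standing hypothesis $\int_{2WN_{E}}^{W_{T}-w_{C}}[c_{\sigma}(x,0,t)-\bar{c}_{\sigma,i}\whole]\ud{x}\approx0$; the counter term is exactly $2WN_{E}[c_{\sigma}^{C}(t)-\bar{c}_{\sigma,i}\whole]$, since the counter surface carries the uniform value $c_{\sigma}^{C}(t)$. For the IDAE term the key step --- and the one I expect to be the main obstacle --- is the symmetry claim that, because the bands have equal width $2w_{A}=2w_{B}$, the horizontal average of $c_{\sigma}(x,0,t)$ over the IDAE equals the arithmetic mean of the two electrode concentrations, i.e.\ $\frac{1}{2}[c_{\sigma}^{E}(t)+c_{\sigma}^{E'}(t)]=\frac{1}{2}c_{\sigma}^{E}(t)$ after using the imposed $c_{\sigma}^{E'}(t)=0$; this is the same centro-symmetry of each unit cell that forced the gap integral to vanish in Theorem~\ref{idae:teo:cE-cE':intC}, and is exactly the statement used for the simulation in Eq.~\eqref{idae:eqn:bar_x_sim}. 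Granting it, the IDAE term is $2WN_{E}[\frac{1}{2}c_{\sigma}^{E}(t)-\bar{c}_{\sigma,i}\whole]$; since its prefactor $2WN_{E}$ coincides with that of the counter term (this is where $w_{C}=2WN_{E}$ is used), dividing the summed identity by $2WN_{E}$ yields Eq.~\eqref{idae:eqn:cC:extC}. The residual symmetry defect in the transient, together with the gap approximation, is what I would fold into the ``$\approx$'' of that equation.

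For the bounds~\eqref{idae:eqn:cotas:extC} I would specialize Eq.~\eqref{idae:eqn:cC:extC} to steady state and rewrite it as $c_{\lambda,f}^{C}\approx2\bar{c}_{\lambda,i}\whole-\frac{1}{2}c_{\lambda,f}^{E}$. The lower bounds are just non-negativity of the concentrations, while $c_{\lambda,f}^{C}\ge0$ forces $c_{\lambda,f}^{E}\lesssim4\bar{c}_{\lambda,i}\whole$ and $c_{\lambda,f}^{E}\ge0$ forces $c_{\lambda,f}^{C}\lesssim2\bar{c}_{\lambda,i}\whole$; together with $c_{\lambda,f}^{E'}=0$ this gives $[c_{\lambda,f}^{E}-c_{\lambda,f}^{E'}]=c_{\lambda,f}^{E}$, as claimed.

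Finally, to confirm that these are the genuinely \emph{operative} limits --- and to pin down the role of condition~\eqref{idae:eqn:lambda':extC} --- I would invoke the whole-cell weighted-sum conservation, namely Corollary~\ref{coplanar:cor:cf:total} with $\Delta\bar{c}_{f}\whole=0$, which gives the pointwise identity $D_{\lambda}c_{\lambda,f}+D_{\lambda'}c_{\lambda',f}=D_{\lambda}\bar{c}_{\lambda,i}\whole+D_{\lambda'}\bar{c}_{\lambda',i}\whole$. Evaluating it on the working electrode at the extreme $c_{\lambda,f}^{E}=4\bar{c}_{\lambda,i}\whole$ yields $c_{\lambda',f}^{E}=\bar{c}_{\lambda',i}\whole-3\frac{D_{\lambda}}{D_{\lambda'}}\bar{c}_{\lambda,i}\whole$, which is non-negative precisely when $D_{\lambda'}\bar{c}_{\lambda',i}\whole\ge3D_{\lambda}\bar{c}_{\lambda,i}\whole$. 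Thus Eq.~\eqref{idae:eqn:lambda':extC} is exactly the condition guaranteeing that the complementary species $\lambda'$ does not run out before species $\lambda$ reaches its extreme, so that $\lambda$ is indeed the determinant species and the bounds above set the limiting current. I would also quickly check the remaining surfaces (e.g.\ $c_{\lambda',f}^{E'}=[D_{\lambda}\bar{c}_{\lambda,i}\whole+D_{\lambda'}\bar{c}_{\lambda',i}\whole]/D_{\lambda'}\ge0$ at the counter) to confirm that no other non-negativity constraint binds first.
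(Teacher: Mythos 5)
Your proposal is correct and follows essentially the same route as the paper: the whole-cell horizontal-average conservation of Eq.~\eqref{idae:eqn:bar_ct:whole}, the three-way split of the bottom integral with the gap term discarded and the IDAE average replaced by the arithmetic mean of the band concentrations, non-negativity of concentrations to bound $c_{\lambda,f}^{E}$ and $c_{\lambda,f}^{C}$, and the weighted-sum conservation of Corollary~\ref{coplanar:cor:cf:total} to extract condition~\eqref{idae:eqn:lambda':extC}. The only (immaterial) difference is organizational: the paper first folds the $\sigma'$ non-negativity constraints into a single two-sided bound via the minimum $D_{\lambda}\bar{c}_{\lambda,i}\whole$ before specializing to $\sigma=\lambda$, whereas you specialize first and verify afterwards that the complementary species does not bind, reaching the same conclusion.
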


Notice that Eqs. \eqref{idae:eqn:cotas:extC} are valid in steady state,
after a long time, comparable with the time constant of the whole cell.
See Eq. \eqref{idae:eqn:tau:intC} but replacing $W$ by $W_{T}$,
since the counter electrode is \emph{internal} to the whole cell.
Thus, for shallow cells $H \ll W_{T}$,
the dominant exponential mode is reduced to Eq. \eqref{idae:eqn:tau:extC}.
This agrees with the dominant mode of the unit cell with external counter electrode,
and only depends on the restricted height $H$.
Conversely, for very tall cells $H \gg W_{T}$,
Eq. \eqref{idae:eqn:tau:intC} depends only on the width of the whole cell $W_{T}$,
which produces much slower time responses.

Eqs. \eqref{idae:eqn:cotas:extC} can hold also $\forall t$
under the additional restriction $D_{O} = D_{R}$
(when applying Corollary \ref{coplanar:cor:ct:total}
instead of Corollary \ref{coplanar:cor:cf:total}
in Eqs. \eqref{idae:eqn:cf:whole:total}).

Finally, note that Eqs. (\ref{idae:eqn:cotas:extC}) imply the necessity of
the simultaneous presence of $\bar{c}_{O,i}\whole$ and $\bar{c}_{R,i}\whole$
in order to achieve steady state currents.

\begin{proof}
	Due to  Eqs. \eqref{idae:eqn:bar_c:whole},
	the average concentration at the bottom of the whole cell
	is given for all $t$ by
	\begin{subequations}
		\begin{align}
			\int_{0}^{W_{T}} c_{\sigma}(x,0,t) - \bar{c}_{\sigma,i}\whole \ud{x} &= 0
			\\
			\intertext{This integral can be splitted in three parts}
			\int_{0}^{2W N_{E}} +
			\underbrace{\int_{2W N_{E}}^{W_{T} - w_{C}}}_{\text{assumed}\approx 0} +
			\int_{W_{T} - w_{C}}^{W_{T}} &= 0
		\end{align}
		and when $2w_{A} = 2w_{B}$
		and the number of bands $N_{E}$ is large enough,
		it can be reduced to
		\begin{equation}
			\label{idae:eqn:pre-cC:extC}
			2W N_{E} \left[
				\frac{c_{\sigma}^{E}(t) + c_{\sigma}^{E'}(t)}{2} - \bar{c}_{\sigma,i}\whole
			\right]
			+ w_{C} [c_{\sigma}^{C}(t) - \bar{c}_{\sigma,i}\whole] \approx 0
		\end{equation}
	\end{subequations}
	since the average concentration on the IDAE satisfies
	\begin{equation}
		\frac{1}{2W N_{E}} \int_{0}^{2W N_{E}} c_{\sigma}(x,0,t) \ud{x}
		\approx \frac{c_{\sigma}^{E}(t) + c_{\sigma}^{E'}(t)}{2}
	\end{equation}
	Therefore, Eq. (\ref{idae:eqn:cC:extC}) is obtained from Eq. (\ref{idae:eqn:pre-cC:extC}),
	when $c_{\sigma}^{E'}(t) = 0$ and $2W N_{E} = w_{C}$.
	
	To obtain the limits for the difference of concentration in steady state,
	one considers Eq. \eqref{idae:eqn:cC:extC} with non-negative concentrations on all electrodes
	\begin{subequations}
		\begin{align}
		-\bar{c}_{\sigma,i}\whole
		\leq [c_{\sigma,f}^{E}/2 - \bar{c}_{\sigma,i}\whole]
		&\approx -[c_{\sigma,f}^{C} - \bar{c}_{\sigma,i}\whole]
		\leq \bar{c}_{\sigma,i}\whole
		\\
		-\bar{c}_{\sigma',i}\whole
		\leq [c_{\sigma',f}^{E}/2 - \bar{c}_{\sigma',i}\whole]
		&\approx -[c_{\sigma',f}^{C} - \bar{c}_{\sigma',i}\whole]
		\leq \bar{c}_{\sigma',i}\whole
		\end{align}
	\end{subequations}
	at the final steady state. Note that the limits for the concentration of species $\sigma'$ may also affect
	the limits for the concentration of species $\sigma$.
	This can be seen by applying the relations
	\begin{subequations}
		\label{idae:eqn:cf:whole:total}
		\begin{align}
			D_{\sigma} [c_{\sigma,f}^{E}/2 - \bar{c}_{\sigma,i}\whole]
			+ D_{\sigma'} [c_{\sigma',f}^{E}/2 - \bar{c}_{\sigma',i}\whole] &= 0
			\\
			D_{\sigma} [c_{\sigma,f}^{C} - \bar{c}_{\sigma,i}\whole]
			+ D_{\sigma'} [c_{\sigma',f}^{C} - \bar{c}_{\sigma',i}\whole] &= 0
		\end{align}
	\end{subequations}
	which are obtained from Corollary \ref{coplanar:cor:cf:total}
	(with $p_{x} = 2W_{T}$ and $\Delta\bar{c}_{f}\whole = 0$)
	at the middle of the gap between consecutive bands and on the surface of the counter electrode $C$.
	Then this leads to
	\begin{subequations}
		\begin{align}
		-D_{\sigma} \bar{c}_{\sigma,i}\whole
		\leq D_{\sigma} [c_{\sigma,f}^{E}/2 - \bar{c}_{\sigma,i}\whole]
		&\approx -D_{\sigma} [c_{\sigma,f}^{C} - \bar{c}_{\sigma,i}\whole]
		\leq D_{\sigma} \bar{c}_{\sigma,i}\whole
		\\
		-D_{\sigma'} \bar{c}_{\sigma',i}\whole
		\leq D_{\sigma} [c_{\sigma,f}^{E}/2 - \bar{c}_{\sigma,i}\whole]
		&\approx -D_{\sigma} [c_{\sigma,f}^{C} - \bar{c}_{\sigma,i}\whole]
		\leq D_{\sigma'} \bar{c}_{\sigma',i}\whole
		\end{align}
	\end{subequations}
	Let $D_{\lambda} \bar{c}_{\lambda,i}\whole = \min(D_{O} \bar{c}_{O,i}\whole, D_{R} \bar{c}_{R,i}\whole)$,
	then the previous expressions can be rewritten as
	\begin{equation}
		\label{idae:eqn:cotas:compacto}
		-D_{\lambda} \bar{c}_{\lambda,i}\whole
		\leq D_{\sigma} [c_{\sigma,f}^{E}/2 - \bar{c}_{\sigma,i}\whole]
		\approx -D_{\sigma} [c_{\sigma,f}^{C} - \bar{c}_{\sigma,i}\whole]
		\leq D_{\lambda} \bar{c}_{\lambda,i}\whole
	\end{equation}
	which finally leads to Eqs. (\ref{idae:eqn:cotas:extC}) when $\sigma = \lambda$.
	
	The reason to restrict Eqs. (\ref{idae:eqn:cotas:extC}) only to the determinant species $\lambda$
	is because only the species with least $D_{\sigma} \bar{c}_{\sigma,i}\whole$ 
	should be fixed to zero concentration $c_{\lambda,f}^{E'} = 0$ at the bands $E'$.
	Otherwise, non-negative concentrations are found in the equations, which are not possible physically.

	This can be seen by checking the minimum and maximum values for $D_{\sigma'} c_{\sigma',f}^{E}$.
	First, take the weighted sum of concentrations in Corollary \ref{coplanar:cor:cf:total}
	($p_{x} = 2W_{T}$ and $\Delta\bar{c}_{f}\whole = 0$) at the bands $E$
	\begin{subequations}
		\begin{equation}
			D_{\sigma} c_{\sigma,f}^{E} + D_{\sigma'} c_{\sigma',f}^{E}
			= D_{\lambda} \bar{c}_{\lambda,i}\whole
			+ D_{\lambda'} \bar{c}_{\lambda',i}\whole
		\end{equation}
		where $\lambda'$ is the complementary species of $\lambda$.
		At the same time that $D_{\sigma} c_{\sigma,f}^{E}$ reaches its extrema, given by Eq. \eqref{idae:eqn:cotas:compacto}
		\begin{equation}
			2 D_{\sigma} \bar{c}_{\sigma,i}\whole - 2 D_{\lambda} \bar{c}_{\lambda,i}\whole
			\leq D_{\sigma} c_{\sigma,f}^{E} \lesssim
			2 D_{\sigma} \bar{c}_{\sigma,i}\whole + 2 D_{\lambda} \bar{c}_{\lambda,i}\whole
		\end{equation}
		$D_{\sigma'} c_{\sigma',f}^{E}$ also reaches its extrema, which are given by
		\begin{equation}
			D_{\lambda'} \bar{c}_{\lambda',i}\whole
			- D_{\lambda} \bar{c}_{\lambda,i}\whole
			-2 D_{\sigma} \bar{c}_{\sigma,i}\whole
			\lesssim D_{\sigma'} c_{\sigma',f}^{E}
			\leq
			3 D_{\lambda} \bar{c}_{\lambda,i}\whole
			+ D_{\lambda'} \bar{c}_{\lambda',i}\whole
			- 2 D_{\sigma} \bar{c}_{\sigma,i}\whole
		\end{equation}
	\end{subequations}
	Since the minimun concentration should be non-negative,
	this leads to the fact that $\sigma$ must be only $\lambda$ (and not $\lambda'$),
	and also to the additional restriction in Eq. (\ref{idae:eqn:lambda':extC}).
\end{proof}

Theorems \ref{idae:teo:cE-cE':intC} and \ref{idae:teo:cE-cE':extC}
show the usefulness of the properties of horizontal averages in Corollaries
\ref{coplanar:cor:bar_ci}, \ref{coplanar:cor:delta_bar_c} and \ref{coplanar:cor:bar_cf},
since they provide a tool for determining the concentration on the counter electrode,
which is unknown \emph{a priori}, since it is controlled by the potentiostat.
This is useful for determining boundary conditions for the counter electrode
and it can be used in simulations that require its inclussion.

Interesting is the fact that having an internal (Theorem \ref{idae:teo:cE-cE':intC})
or external (Theorem \ref{idae:teo:cE-cE':extC}) counter electrode
produces bipolar or unipolar limiting currents respectively.

Also, under the conditions that were just analyzed: 
band electrodes of equal width $2w_{A} = 2w_{B}$ 
and external counter electrode of width equal to that of the IDAE $w_{C} = 2W N_{E}$,
it appears to be that there is no significant advantage
of having an external counter electrode (both electrode arrays potentiostated)
versus using one of the arrays as counter electrode (only one array potentiostated),
at least in terms of the current range.
According to Lemma \ref{idae:lem:if} and Theorem \ref{idae:teo:cE-cE':extC} for external counter electrode,
the current is $i_{f}^{E} \propto c_{\lambda,f}^{E} - c_{\lambda,f}^{E'}$
and the difference of concentration ranges from $0$ to $+4\bar{c}_{\lambda,i}\whole$.
On the other hand, according to Lemma \ref{idae:lem:if} and Theorem \ref{idae:teo:cE-cE':intC} for internal counter electrode,
the current is still $i_{f}^{E} \propto c_{\lambda,f}^{E} - c_{\lambda,f}^{E'}$
but the difference of concentration ranges from $-2\bar{c}_{\lambda,i}\whole$ to $+2\bar{c}_{\lambda,i}\whole$, that is, it also spans a range of $4\bar{c}_{\lambda,i}\whole$.

Nevertheless, taking a careful look to the average at the bottom of the whole cell in Eq. \eqref{idae:eqn:pre-cC:extC} 
suggests that with an external counter electrode wider than the IDAE $w_{C} \geq 2W N_{E}$,
the current could span larger ranges.
This is because a slight decrease of the concentration on the counter electrode $c_{\lambda,f}^{C}$
below the average concentration $\bar{c}_{\lambda,i}\whole$
could cause a large increase of the concentration of the freely potentiostated array $c_{\lambda,f}^{E}$.
However, the analysis of the non-linearities
(related to the physical constraint of non-negative concentrations)
and the limits for the steady state current become more difficult.


\section{Conclusions}

The properties of horizontal average and (weighted) sum of concentrations
show several implications in the behavior of
a periodic cell with finite height and two-dimensional symmetry.

In the initial steady state,
the net current is zero (100\% collection efficiency) and both,
the horizontal average and the (weighted) sum of concentrations of both species,
are uniform in the cell, see Corollaries \ref{coplanar:cor:bar_ci} and \ref{coplanar:cor:ci:total}.

During the transtient state,
a change in the average of concentration at the bottom of the cell
(where the electrodes are located) produces a non-zero net current
(collection efficiency less than 100\%),
see Eq. \eqref{coplanar:eqn:i_net} in Corollary \ref{coplanar:cor:delta_bar_c}.
Subsequently, this non-zero net current during the transient
produces accumulation (or depletion) of species in the cell,
see Eq. \eqref{coplanar:eqn:delta_bar_c}
in Corollary \ref{coplanar:cor:delta_bar_c}.
The duration of the transient is governed by the slowest exponential mode
in Eq. \eqref{coplanar:eqn:i_net}, and increases as the height $H$ increases.

In the final steady state, the net current must be zero 
(100\% collection efficiency)
despite any non-zero net current during the transient.
The horizontal average and (weighted) sum of concentrations
become again uniform in the cell, but may not equal their initial counterparts
due to accumulation (or depletion) of species.
See Corollaries \ref{coplanar:cor:bar_cf} and \ref{coplanar:cor:cf:total}.

Note that this accumulation (or depletion) of species
is not present during the transient and final steady states,
when the average of concentration at the bottom of the cell
remains at the same value as its initial counterpart.
In this case the net current always equals zero,
therefore the horizontal average and (weighted) sum of concentrations
maintain the same value in the initial and final steady states.

If the cell has semi-infinite geometry $H \to +\infty$,
the final steady state behaves differently.
When the average concentration at the bottom of the cell
is driven out from its initial counterpart,
the final net current becomes non-zero (collection efficiency less than 100\%) and the horizontal average of final concentration loses its uniformity along the $z$-axis.

These properties of horizontal averages and (weighted) sum of concentrations
are also useful for determining the concentration on a counter electrode
(which is not known a priori), and to determine non-linearities
caused by depletion of electrochemical species
at electrodes potentiostated at extreme voltages.
This can be seen for the case of IDAE
in Theorems \ref{idae:teo:cE-cE':intC} and \ref{idae:teo:cE-cE':extC},
and it is specially important to take into account when performing simulations.

More results have been found for IDAE using the previous properties.
Normally, the IDAE is operated in dual mode (voltages are applied at each array)
either with an external or internal counter electrode,
of which the former is most commonly found in the literature.
Comparing both modes, the results
in Lema \ref{idae:lem:if} and Theorems \ref{idae:teo:cE-cE':intC} and \ref{idae:teo:cE-cE':extC}
show that the maximum current range that can be spanned in steady state,
either when using external or internal counter electrode, is the same.
Also, it is shown that the time the current requires to reach steady state
in case of using external counter electrode is longer
than that required in case of using internal counter electrode,
see Eqs. \eqref{idae:eqn:tau:extC} and \eqref{idae:eqn:tau:intC} respectively.
This suggests that, despite of being more common in the literature,
an IDAE configuration with external counter electrode provides
no significant advantage compared with the case of internal counter electrode.
This is true, however, under the restrictions
of Theorems \ref{idae:teo:cE-cE':intC} and \ref{idae:teo:cE-cE':extC},
where the IDAE has bands of equal width
and the width of the counter electrode equals that of the whole IDAE.


\section*{Acknowledgements}

The authors would like to thank
\emph{King Mongkut’s University of Technology Thonburi} through the
\emph{Petchra Pra Jom Klao Ph.~D. scholarship} (Grant No. 28/2558)
for financially supporting {CFGY}.
Finally, the authors acknowledge 
the \emph{Higher Education Research	Promotion} and \emph{National Research University Project of Thailand}, \emph{Office of the Higher Education Commission} 
and the financial support provided by \emph{King Mongkut's University of Technology Thonburi} through the \emph{KMUTT 55th Anniversary Commemorative Fund}.

\bibliographystyle{unsrtnat}
\bibliography{references}


\clearpage
\begin{center}
	\LARGE
	Supplementary information
\end{center}

\setcounter{page}{1}
\setcounter{section}{0}
\renewcommand*{\thesection}{S\arabic{section}}  
\renewcommand*{\theHsection}{\thesection}  


\section{Additional definitions}

\subsection{Fourier series and Laplace transform}
\label{transforms}

Fourier series and Laplace transform are extensively used in \S\ref{theory},
therefore they are defined briefly in this section.

\begin{definicion}
	The exponential version of the Fourier coefficients $\fourier f(n_{x})$
	and Fourier series of the periodic function $f(x)$, with period $p_{x}$,
	are defined by
	\begin{subequations}
		\label{transforms:eqn:fourier}
		\begin{align}
			\fourier f(n_{x}) &= \frac{1}{p_{x}}
			\int_{-p_{x}/2}^{+p_{x}/2} f(x)\, \e^{-\bm{i} x\, n_{x} 2\pi/p_{x}} \ud{x}
			\\
			f(x) &= \sum_{n_{x} = -\infty}^{+\infty} \fourier f(n_{x})\,
			\e^{\bm{i} x\, n_{x} 2\pi/p_{x}}
		\end{align}
	\end{subequations}
	where $\bm{i}$ is the imaginary unit.
\end{definicion}

\begin{definicion}
	The Laplace transform $\laplace f(s)$ of the time function $f(t)$,
	and its inverse, are defined respectively as
	\begin{subequations}
		\begin{align}
			\laplace f(s) &= \int_{0^{-}}^{+\infty} f(t)\, \e^{-st} \ud{t}
			\\
			f(t) &= \frac{1}{\bm{i} 2\pi} \oint_{\Gamma} \laplace f(s)\, \e^{st} \ud{s}
		\end{align}
	\end{subequations}
	where $\bm{i}$ is the imaginary unit,
	and $\Gamma$ is a closed path in the complex plane,
	surrounding the poles of $\laplace f(s)$.
\end{definicion}

\section{Output data of simulations}

\subsection{Figure \ref{idae:fig:mesh}}
\label{data:s32:mesh}
\lstinputlisting[basicstyle={\ttfamily \footnotesize}]{fig-2.txt}

\subsection{Figure \ref{idae:fig:i_sim}}
\label{data:s32:simulation}
\lstinputlisting[basicstyle={\ttfamily \footnotesize}]{fig-3.txt}

\end{document}